\g@addto@macro{\UrlBreaks}{\UrlOrds}
\let\c@theorem\relax
\let\c@lemma\relax
\let\c@corollary\relax
\let\c@definition\relax
\let\c@example\relax
\providecommand*{\cupdot}{  \mathbin{    \mathpalette\@cupdot{}  }}
\newcommand*{\@cupdot}[2]{  \ooalign{    $\m@th#1\cup$\cr
    \hidewidth$\m@th#1\cdot$\hidewidth
  }}
\newcommand{\ie}{i.e.\@\xspace}
\newcommand{\eg}{e.g.\@\xspace}
\newcommand{\wrt}{w.\,r.\,t.\@\xspace}
\newcommand{\stt}{s.\,t.\@\xspace}
\newcommand{\suchthat}{s.\,t.\@\xspace}
\newcommand{\fa}{f.\,a.\@\xspace}
\newcommand{\Wloss}{W.l.o.g.\xspace}
      \newcommand{\complClFont}[1]{\mathbf{#1}}                                     \newcommand{\problemFont}[1]{\mathsf{#1}}         \newcommand{\mathCommandFont}[1]{\mathrm{#1}}     
\newcommand{\bigO}[1]{\protect\ensuremath{{\mathcal{O}(#1)}}}      
\newcommand{\size}[1]{{\protect\ensuremath{\vert\nobreak#1\nobreak\vert}}}
\newcommand{\leqlogm}{\protect\ensuremath{\leq^\mathCommandFont{log}_\mathCommandFont{m}}}
\newcommand{\ATIMES}[1]{\protect\ensuremath{\complClFont{ATIME}_\Sigma\left(#1\right)}\xspace}
\newcommand{\ATIMEP}[1]{\protect\ensuremath{\complClFont{ATIME}_\Pi\left(#1\right)}\xspace}
\newcommand{\ATIMEI}[2]{\protect\ensuremath{\complClFont{ATIME}_{#2}\left(#1\right)}\xspace}
\newcommand{\AP}{\protect\ensuremath{\complClFont{AP}}\xspace}
\renewcommand{\P}{\protect\ensuremath{\complClFont{P}}\xspace}
\newcommand{\NP}{\protect\ensuremath{\complClFont{NP}}\xspace}
\newcommand{\coNP}{\protect\ensuremath{\complClFont{coNP}}\xspace}
\newcommand{\PSPACE}{\protect\ensuremath{\complClFont{PSPACE}}\xspace}
\newcommand{\EXPSPACE}{\protect\ensuremath{\complClFont{EXPSPACE}}\xspace}
\newcommand{\EXP}{\protect\ensuremath{\complClFont{EXP}}\xspace}
\newcommand{\NEXP}{\protect\ensuremath{\complClFont{NEXP}}\xspace}
\newcommand{\AEXP}{\protect\ensuremath{\complClFont{AEXP}}\xspace}
\newcommand{\coNEXP}{\protect\ensuremath{\complClFont{coNEXP}}\xspace}
\newcommand{\AEXPPOLY}{\protect\ensuremath{\complClFont{AEXP}(\mathrm{poly})}\xspace}
\newcommand{\SigmaP}[1]{{\ensuremath\protect\Sigma^\mathCommandFont{P}_{#1}}}
\newcommand{\PiP}[1]{{\ensuremath\protect\Pi^\mathCommandFont{P}_{#1}}}
\newcommand{\DeltaP}[1]{{\ensuremath\protect\Delta^\mathCommandFont{P}_{#1}}}
\newcommand{\SigmaE}[1]{{\ensuremath\protect\Sigma^\mathCommandFont{E}_{#1}}}
\newcommand{\PiE}[1]{{\ensuremath\protect\Pi^\mathCommandFont{E}_{#1}}}
\newcommand{\DeltaE}[1]{{\ensuremath\protect\Delta^\mathCommandFont{E}_{#1}}}
\newcommand{\fr}{\ensuremath{\mathrm{Fr}}}
\newcommand{\QBF}{\protect\ensuremath\problemFont{QBF}}
\newcommand{\QBSF}{\protect\ensuremath\problemFont{QBSF}}
\newcommand{\SKOL}{\protect\ensuremath\problemFont{QBSF^{\mathrm{uniq}}}}
\newcommand{\calC}{\protect\ensuremath{\mathcal{C}}}
\newcommand{\calF}{\protect\ensuremath{\mathcal{F}}}
\newcommand{\calI}{\protect\ensuremath{\mathcal{I}}}
\newcommand{\calS}{\protect\ensuremath{\mathcal{S}}}
\newcommand{\calT}{\protect\ensuremath{\mathcal{T}}}
\providecommand{\dfn}{\mathrel{\mathop:}=}
\newtheoremstyle{theorem}
{\bigskipamount}{\medskipamount}{}{}{\bfseries}{.}{0.5em}{}
\newtheoremstyle{example}
{\bigskipamount}{\medskipamount}{}{}{\bfseries}{:}{\newline }{}
\newtheoremstyle{remark2}
{\bigskipamount}{\medskipamount}{}{}{\itshape}{:}{0.5em}{}
\newtheoremstyle{claimsty}
{\medskipamount}{\medskipamount}{}{}{\itshape}{.}{0.5em}{}
\theoremstyle{remark2}
\theoremstyle{plain}
\newtheorem{theorem}{Theorem}
\newtheorem*{theorem*}{Theorem}
\newaliascnt{lemma}{theorem}
\newaliascnt{corollary}{theorem}
\newaliascnt{definition}{theorem}
\newaliascnt{example}{theorem}
\newtheorem{lemma}[lemma]{Lemma}
\newtheorem{corollary}[corollary]{Corollary}
\theoremstyle{definition}
\newtheorem{definition}[definition]{Definition}
\theoremstyle{theorem}
\newtheorem*{example}{Example}
\crefname{theorem}{theorem}{theorems}
\crefname{lemma}{lemma}{lemmas}
\crefname{example}{example}{example}
\crefname{corollary}{corollary}{corollaries}
\crefname{definition}{definition}{definitions}
\crefname{example}{example}{examples}
\begin{document}

\title{Complete Problems of Propositional Logic\\ for the Exponential Hierarchy}
\titlerunning{Complete Problems for the Exponential Hierarchy}

\author{Martin Lück}
\institute{Institut für Theoretische Informatik\\
Leibniz Universität Hannover, DE\\
\texttt{lueck@thi.uni-hannover.de}}
\authorrunning{M. Lück}

\maketitle

\begin{abstract}
Large complexity classes, like the exponential time hierarchy, received little attention in terms of finding complete problems.
In this work a generalization of propositional logic is investigated which fills this gap with the introduction of \emph{Boolean higher-order quantifiers} or equivalently \emph{Boolean Skolem functions}. This builds on the important results of Wrathall and Stockmeyer regarding complete problems, namely QBF and QBF$_k$, for the polynomial hierarchy. Furthermore it generalizes the \emph{Dependency QBF} problem introduced by Peterson, Reif and Azhar which is complete for $\NEXP$, the first level of the exponential hierarchy. Also it turns out that the hardness results do not collapse at the consideration of conjunctive and disjunctive normal forms, in contrast to plain QBF.
\end{abstract}

\section{Introduction}

The class of problems decidable in polynomial space, $\PSPACE$, can equivalently be defined as the class $\AP$, \ie, via alternating machines with polynomial runtime and no bound on the alternation number. The classes $\SigmaP{k}$ and $\PiP{k}$ of the polynomial hierarchy are then exactly the restrictions of $\AP$ to levels of bounded alternation \cite{alternation}. The problem of \emph{quantified Boolean formulas}, often called $\QBF$ resp. $\QBF_k$,  is complete for these classes. The subscript $k$ denotes the number of allowed quantifier alternations of a qbf in prenex normal form, whereas $\QBF$ imposes no bound on quantifier alternations. For this correspondence Stockmeyer called $\QBF$ the \emph{$\omega$-jump} of the bounded $\QBF_k$ variants, and similar the class $\PSPACE$ the $\omega$-jump of the polynomial hierarchy \cite{polyH}, in reference to the arithmetical hierarchy.

On the scale of exponential time the alternation approach leads to discrepancies regarding natural complete problems. Unbounded alternations in exponential time ($\AEXP$) leads to the same class as exponential space, in symbols $\AEXP = \EXPSPACE$, and therefore $\EXPSPACE$ is analogously the $\omega$-jump of exponential time classes with bounded alternations \cite{alternation}. Complete problems for $\EXPSPACE$ are rare, and often artificially constructed, frequently just succinctly encoded variants of $\PSPACE$-complete problems
\cite{allender_eric_minimum_2015,goos_second_1995, MeyerS72}.
If the number of machine alternations is bounded by a polynomial then this leads to the class $\AEXPPOLY$ which in fact lies between the exponential time hierarchy and its $\omega$-jump.

In this paper a natural complete problem is presented, similar to $\QBF_k$, which allows quantification over Boolean functions and is complete for the levels of the exponential time hierarchy.

The first appearance of such Boolean formulas with quantified Boolean functions was in the work of Peterson, Reif and Azhar who modeled games of imperfect information as a problem they called \emph{DQBF} or \emph{Dependency QBF} \cite{dqbf}. The basic idea is that in a game the player $\exists$ may or may not see the whole state that is visible to player $\forall$, and hence her next move must depend only on the disclosed information.
The existence of a winning strategy in a game can often be modeled as a formula with a prefix of alternating quantifiers corresponding to the moves. This naturally fits games where all information about the state of the game is visible to both players, as quantified variables always may depend on each previously quantified value. Any existentially quantified proposition $x$ can equivalently be replaced by its \emph{Skolem function} which is a function depending on the $\forall$-quantified propositions to the left of $x$.

To model imperfect information in the game, all one has to do is now to restrict the arguments of the Skolem function. For first-order predicate logic several formal notions have been introduced to accommodate this semantics, \eg, Henkin's branching quantifiers (see \cite{blass_henkin_1986}) or Hintikka's and Sandu's \emph{Independence Friendly Logic} \cite{hintikka_informational_1989}.

\subsubsection*{Contribution.}

The presented problem is a generalization of the QBF problem where Skolem functions of variables are explicit syntactical objects. This logic will be called QBSF as in \emph{Quantified Boolean Second-order Formulas}. It is shown that this introduction of function quantifiers to QBF (reminding of the step from first-order predicate logic to second-order predicate logic) yields enough expressive power the encode alternating quantification of exponentially large words. The problem of deciding the truth of a given QBF with higher-order quantifiers is complete for the class $\AEXPPOLY$, but has natural complete fragments for every level of the exponential hierarchy. 

The complexity of the problem is classified for several fragments, namely bounded numbers of function quantifiers and proposition quantifiers, as well as the restriction to formulas where function variables only occur as the Skolem functions of quantified propositions, \ie, always with the same arguments. The latter fragment is used as an alternative hardness proof for the original DQBF problem by Peterson, Reif and Azhar \cite{dqbf}.
 
\allowdisplaybreaks

\section{Preliminaries}

The reader is assumed to be familiar with usual notions of Turing machines (TMs) and complexity classes, especially in the setting of alternation introduced by Chandra, Kozen and Stockmeyer \cite{alternation}. In accordance to the original definition of alternating machines (ATMs) we distinguish them by the type of their initial state.
We abbreviate alternating Turing machines that start in an existential state as \emph{$\Sigma$ type machines} ($\Sigma$-ATMs), and those which start in a universal state as \emph{$\Pi$ type machines} ($\Pi$-ATMs).

We define $\EXP$ and $\NEXP$ as the classes of problems which are decidable by a \mbox{(non-)}deterministic machine in time $2^{p(n)}$ for a polynomial $p$.

\begin{definition}
For $Q \in \{ \Sigma, \Pi \}$ $g(n) \geq 1$, define $\ATIMEI{t(n), g(n)}{Q}$ as the class of all problems $A$ for which there is a $Q$-ATM deciding $A$ in time $\bigO{t(n)}$ with at most $g(n)$ alternations.
\end{definition}

The number of alternations is the maximal number of \emph{transitions} between universal and existential states or vice versa that $M$ does on inputs of length $n$, counting the initial configuration as first alternation. A polynomial time $\Sigma$-ATM ($\Pi$-ATM) with $g(n)$ alternations is also called $\SigmaP{g}$-machine ($\PiP{g}$-machine). For exponential time, \ie, $2^{p(n)}$ for al polynomial $p$, we analogously write $\SigmaE{g}$ resp. $\PiE{g}$.

\begin{definition}[\cite{alternation}]
\begin{alignat*}{2}
&\AEXP &&\dfn \bigcup_{t \in 2^{n^{\bigO{1}}}} \ATIMES{t,t}\text{,}\\
&\AEXPPOLY \;&&\dfn \bigcup_{\substack{t \in 2^{n^{\bigO{1}}}\\ p \in n^{\bigO{1}}}} \ATIMES{t,p}\text{.}
\end{alignat*}
\end{definition}

In this work we further require the notion of \emph{oracle Turing machines}. An oracle Turing machine is an ordinary Turing machine which additionally has access to an \emph{oracle language} $B$. The machine queries $B$ by writing an instance $x$ on a special \emph{oracle tape} and moving to a \emph{query state} $q_?$. But then instead of $q_?$ itself, one of two states, say, $q_+$ and $q_-$, is assumed instantaneously to identify the answer if $x \in B$ or not. There is no bound on the number of queries during a computation of an oracle machine, \ie, the machine can erase the oracle tape and ask more questions.

If $B$ is a language, then the usual complexity classes $\P, \NP, \NEXP$ etc. are generalized to $\P^B, \NP^B, \NEXP^B$ etc.\ where the definition is just changed from ordinary Turing machines to corresponding oracle machines with oracle $B$.
If $\mathcal{C}$ is a class of languages, then $\P^{\mathcal{C}} \dfn \bigcup_{B \in \mathcal{C}} \P^B$ and so on.

\medskip

To classify the complexity of the presented decision problems we require some standard definitions.

\begin{definition}
A \emph{logspace-reduction} from a language $A$ to a language $B$ is a function $f$ that is computable in logarithmic space such that $x \in A \Leftrightarrow f(x) \in B$.
If such $f$ exists then write $A \leqlogm B$. Say that $B$ is \emph{$\leqlogm$-hard} for a complexity class $\calC$ if $A \in \calC$ implies $A \leqlogm B$, and $B$ is \emph{$\leqlogm$-complete} for $\calC$ if it is $\leqlogm$-hard for $\calC$ and $B \in \calC$.
\end{definition}

\begin{definition}[The Polynomial Hierarchy \cite{polyH}]
The levels of the polynomial hierarchy are defined inductively as follows, where $k \geq 1$:
\begin{itemize}
\item $\SigmaP{0} = \PiP{0} = \DeltaP{0} \dfn \P$.
\item $\SigmaP{k} \dfn \NP^{\SigmaP{k-1}}$, $\PiP{k} \dfn \coNP^{\SigmaP{k-1}}$, $\DeltaP{k} \dfn \P^{\SigmaP{k-1}}$.
\end{itemize}
\end{definition}

\begin{definition}[The Exponential Hierarchy \cite{Mocas, Simon}]
The levels of the exponential hierarchy are defined inductively as follows, where $k \geq 1$:
\begin{itemize}
\item $\SigmaE{0} = \PiE{0} = \DeltaE{0} = \EXP$.
\item $\SigmaE{k} \dfn \NEXP^{\SigmaP{k-1}}$, $\PiE{k} \dfn \coNEXP^{\SigmaP{k-1}}$, $\DeltaE{k} \dfn \EXP^{\SigmaP{k-1}}$.
\end{itemize}
\end{definition}

\begin{theorem}[\cite{alternation}]\label{thm:ph_by_alternations}
For all $k \geq 1$:
\begin{align*}
\SigmaP{k} &= \bigcup_{p \in n^{\bigO{1}}} \ATIMES{p, k}\text{,}\\
\PiP{k} &= \bigcup_{p \in n^{\bigO{1}}} \ATIMEP{p, k}\text{.}
\end{align*}
\end{theorem}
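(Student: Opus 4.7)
The plan is to prove the theorem by induction on $k$, treating only the $\Sigma$-case explicitly, since the $\Pi$-case follows by complementation: complementing an input acceptance condition turns a $\Sigma$-ATM with $k$ alternations into a $\Pi$-ATM with $k$ alternations, and by definition $\PiP{k}=\mathbf{co}\SigmaP{k}$. The base case $k=1$ is immediate: $\ATIMES{p,1}$ means a polynomial-time ATM with no transition between quantifier types and initial existential state, i.e.\ a polynomial-time nondeterministic machine, which is exactly $\NP=\SigmaP{1}$.

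For the inductive step I would show both inclusions of $\SigmaP{k+1}=\bigcup_{p}\ATIMES{p,k+1}$. For the inclusion $\bigcup_{p}\ATIMES{p,k+1}\subseteq\SigmaP{k+1}$, take a polynomial-time $\Sigma$-ATM $M$ with at most $k+1$ alternations. Simulate the initial existential block of $M$ by an $\NP$ machine $N$: it nondeterministically guesses the choices in existential states until $M$ first enters a universal state in some configuration $c$. By the induction hypothesis applied to the $\Pi$-part (a $\Pi$-ATM with $k$ alternations starting from $c$), the predicate "$c$ is accepting" lies in $\PiP{k}$, so its complement is in $\SigmaP{k}$. Thus $N$ can finish the computation with a single oracle query to a $\SigmaP{k}$-language, placing the whole simulation in $\NP^{\SigmaP{k}}=\SigmaP{k+1}$.

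For the converse $\SigmaP{k+1}\subseteq\bigcup_{p}\ATIMES{p,k+1}$, let $L\in\NP^{A}$ with $A\in\SigmaP{k}$. I would construct a $\Sigma$-ATM $M$ that in its first existential phase guesses (i) the nondeterministic choices of the $\NP$-machine, (ii) the sequence of oracle queries $q_1,\dots,q_m$, (iii) the claimed answers $a_1,\dots,a_m\in\{\text{yes},\text{no}\}$, and (iv) for every query with $a_i=\text{yes}$, a polynomial-size existential certificate attesting to $q_i\in A$ at the outermost $\exists$-level of its $\SigmaP{k}$-verification. $M$ accepts iff the $\NP$-computation accepts with these answers and all these certificates check out. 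Then $M$ switches universally and, for each query with $a_i=\text{no}$, begins a $\PiP{k}$-verification that $q_i\notin A$; simultaneously, for the $\text{yes}$-queries it completes the remaining $k-1$ alternations of the $\SigmaP{k}$-verification starting from the guessed certificates. All these subcomputations can be interleaved on a single tape and aggregate into the same alternation budget because their $\exists$/$\forall$-levels align. Counting: one initial $\exists$-block plus $k$ further alternations for the two kinds of verifications yields $k+1$ alternations in total.

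The main obstacle I anticipate is the second direction, specifically arranging the batching of oracle queries and the interleaving of $\exists$- and $\forall$-certificates so that the alternation count is exactly $k+1$ and not $k+2$. The crucial trick is to absorb the outermost $\exists$-quantifier of each $\SigmaP{k}$-verification of a $\text{yes}$-query into $M$'s very first existential phase (so that only the inner $k-1$ alternations are "spent"), while the $\PiP{k}$-verification of $\text{no}$-queries uses the next universal block as its outermost $\forall$-quantifier. This careful alignment is where the proof does real work; after it, the bookkeeping of time bounds is routine polynomial closure.
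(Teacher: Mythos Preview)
The paper does not supply its own proof of this theorem; it is stated with a citation to \cite{alternation} and used as a black box. There is therefore nothing in the paper to compare your argument against.

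That said, your proposal is the standard proof of this classical result and is essentially correct. Two small points of presentation. First, you need not ``guess'' the oracle queries $q_1,\dots,q_m$: once the nondeterministic choices and the claimed answers $a_1,\dots,a_m$ are fixed, the queries are \emph{determined} by deterministically simulating the base $\NP$ machine, so they are computed rather than guessed. Second, the phrase ``$M$ accepts iff the $\NP$-computation accepts with these answers and all these certificates check out'' blurs the control flow: the check that the $\NP$ computation accepts is deterministic and can be finished inside the first existential block (rejecting outright on failure), whereas the certificate verifications genuinely require the remaining alternations. Your alternation accounting is right: absorbing the outermost $\exists$ of each yes-verification into the initial existential block, and merging the universal branching over ``which verification to perform'' with the first $\forall$-block of each subverification, yields exactly $k+1$ alternations.
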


The next two lemmas characterize the classes of the exponential hierarchy similar to the characterization of the polynomial hierarchy in \cite{polyH,Wrathall}. The proofs are rather straightforward adaptions of the characterization of the polynomial hierarchy.

First, it is possible to reduce a language recognized in alternating exponential time down to a language with deterministic polynomial time complexity by introducing additional \emph{word quantifiers}. These words roughly correspond to the "choices" of an encoded alternating machine, hence for the polynomial hierarchy words of polynomial length are quantified.
To encode machines deciding problems in $\SigmaE{k}$ or $\PiE{k}$ we require, informally spoken, \emph{large word quantifiers}, \ie, we quantify words of exponential length \wrt to the input.

\begin{lemma}\label{thm:eh_by_quantifiers}
For $k\geq 1$, $A \in \SigmaE{k}$ if and only if there is $t \in 2^{n^{\bigO{1}}}$\!\! and $B \in \P$ \stt{}
\[
x \in A \Leftrightarrow \exists y_1 \forall y_2 \ldots \Game_k y_k \; : \; \langle x,y_1,\ldots,y_n\rangle \in B\text{,}
\]
where $\Game_k = \forall$ for even $k$ and $\Game_k = \exists$ for odd $k$, and all $y_i$ have length bounded in $t(\size{x})$.
\end{lemma}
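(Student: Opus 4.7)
The plan is to adapt the classical Wrathall--Stockmeyer quantifier characterization of $\SigmaP{k}$ (cf.\ \cref{thm:ph_by_alternations}) to the exponential setting: the outermost witness now has exponential length in $\size{x}$, while the oracle-witnesses remain polynomial in the oracle query length, which is itself exponential in $\size{x}$.

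For the forward direction, assume $A \in \SigmaE{k} = \NEXP^{\SigmaP{k-1}}$ and fix an $\NEXP$ oracle machine $M$ with runtime $2^{p(\size{x})}$ deciding $A$ relative to some $L \in \SigmaP{k-1}$. I would take $y_1$ of length $t(\size{x}) \dfn 2^{O(p(\size{x}))}$ and encode in it a full accepting computation of $M^L(x)$ together with the sequence $q_1,\dots,q_m$ of oracle queries issued along the path and the guessed answer bits $b_1,\dots,b_m$. The predicate ``$y_1$ encodes a syntactically valid accepting computation whose queries are consistent with its guessed bits'' is decidable in time polynomial in $\size{\langle x, y_1\rangle}$. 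What remains is to certify the guessed bits against $L$. By \cref{thm:ph_by_alternations}, both $q_i \in L$ and $q_i \notin L$ admit characterizations by $k-1$ alternating quantifier blocks of polynomial length in $\size{q_i}$, starting with $\exists$ in the yes-case and with $\forall$ in the no-case. Using $\bigwedge_i \exists u\,\varphi_i \equiv \exists \vec u \bigwedge_i \varphi_i$ and its dual, I would merge all per-query witnesses level by level into single tuples, producing $\phi_{\mathrm{yes}} = \exists U_1 \forall U_2 \cdots$ and $\phi_{\mathrm{no}} = \forall V_1 \exists V_2 \cdots$, each with $k-1$ blocks and each block of length polynomial in $\size{y_1}$, hence exponential in $\size{x}$.

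Prenexifying $\exists y_1 \bigl[ \phi_{\text{path}}(x,y_1) \wedge \phi_{\mathrm{yes}} \wedge \phi_{\mathrm{no}} \bigr]$ then amounts to absorbing $\exists y_1$ into the leading $\exists U_1$ of $\phi_{\mathrm{yes}}$ and interleaving the two remaining prefixes by a one-step shift: at level $j \geq 2$, block $j$ carries $U_j$ together with $V_{j-1}$, whose quantifiers coincide because $\phi_{\mathrm{yes}}$ and $\phi_{\mathrm{no}}$ start with opposite polarities. The result is a prenex sentence with exactly $k$ alternating blocks of exponential length over a polynomial-time matrix, which serves as the desired $B$ and bound $t$. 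The converse direction is essentially a one-liner: given $t$ and $B \in \P$, set $L \dfn \{\,\langle x, y_1\rangle : \forall y_2 \exists y_3 \cdots \Game_k y_k\, \langle x, y_1,\dots,y_k\rangle \in B\,\}$. Since each $\size{y_j} \leq t(\size{x})$ is polynomial in $\size{\langle x, y_1\rangle}$ and $B \in \P$, we get $L \in \PiP{k-1}$; the machine that nondeterministically writes $y_1$ on its oracle tape and queries $\langle x, y_1\rangle$ witnesses $A \in \NEXP^{\PiP{k-1}} = \NEXP^{\SigmaP{k-1}} = \SigmaE{k}$.

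The main obstacle is the length and alternation bookkeeping in the forward direction: one must verify that merging the up to $2^{p(\size{x})}$ per-query witnesses into per-level tuples still fits in exponentially many bits (since the per-witness length is polynomial in $\size{q_i} \leq 2^{O(p(\size{x}))}$), and that the one-step interleaving genuinely collapses two sequences of $k-1$ alternations into exactly $k$ rather than the naive $2(k-1)$. Both arguments are routine for $\SigmaP{k}$ but deserve the explicit length accounting above when the underlying machine already runs in exponential time.
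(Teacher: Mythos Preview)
Your approach is essentially the paper's: encode the nondeterministic computation together with oracle queries and guessed answers into the outermost witness, then certify the answers using the polynomial-hierarchy characterization, and for the converse push the inner quantifiers into a $\PiP{k-1}$ oracle. One cosmetic difference: where you explicitly interleave the $\Sigma_{k-1}$ witnesses for yes-answers with the $\Pi_{k-1}$ witnesses for no-answers, the paper simply observes that verifying \emph{all} guessed answers against the oracle $C\in\SigmaP{k-1}$ is a $\P^{C}\subseteq\SigmaP{k}$ problem in the instance $\langle x,z\rangle$ and then invokes the $\SigmaP{k}$ quantifier characterization as a black box. This hides the interleaving you spell out and avoids the alternation bookkeeping you flag as the main obstacle.

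There is one genuine slip in your converse direction. You claim that each $\size{y_j}\leq t(\size{x})$ is polynomial in $\size{\langle x,y_1\rangle}$, hence $L\in\PiP{k-1}$. But $y_1$ is only \emph{bounded} by $t(\size{x})$, not forced to have that length; if $y_1$ happens to be short, then $t(\size{x})$ is exponential in $\size{\langle x,y_1\rangle}$ and your $L$ is not obviously in $\PiP{k-1}$. The paper fixes this by padding the oracle input with $0^{t(\size{x})}$, defining
\[
D \dfn \Set{ \langle 0^{t(\size{x})},x,y_{1}\rangle | \forall y_2 \ldots \Game_k y_k : \langle x,y_1,\ldots,y_k\rangle \in B }\text{,}
\]
so that the inner quantifier bound $t(\size{x})$ is trivially polynomial in the length of the query. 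Your $\NEXP$ machine can write this padding on the oracle tape in exponential time, so the fix is immediate once noticed.
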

\begin{proof}
We have to show that for all $k\geq 1$ it is $A \in \SigmaE{k}$ if and only if there is $t \in 2^{n^{\bigO{1}}}$, $B \in \P$ \stt{}
\[
x \in A \Leftrightarrow \exists y_1 \forall y_2 \ldots \Game_k y_k \; : \; \langle x,y_1,\ldots,y_n\rangle \in B\text{,}
\]
where $\Game_k = \forall$ for even $k$ and $\Game_k = \exists$ for odd $k$, and all $y_i$ have length bounded in $t(\size{x})$.

\medskip

"$\Leftarrow$": Define
\[
D \dfn \Set{ \langle 0^{t(\size{x})},x,y_{1}\rangle |  \forall y_2 \ldots \Game_k y_k \; : \; \langle x,y_1,\ldots,y_n\rangle \in B }\text{,}
\]
where $0^{t(\size{x})}$ is the string consisting of $t(\size{x})$ zeros and the quantified $y_i$ are length-bounded by $t(\size{x})$. Then $D \in \PiP{k-1}$, and the algorithm that guesses a $y_1$ of length $\leq t(\size{x})$ and queries $D$ as oracle witnesses that $A \in \NEXP^{\PiP{k-1}} = \SigmaE{k}$.

"$\Rightarrow$":
Let $A$ be decided by some non-deterministic Turing machine $M$ with oracle $C \in \SigmaP{k-1}$. Assume that $M$ has runtime $t(n) = 2^{p(n)}$ for some polynomial $p$.
Consider now words of the form $z = \langle d,q,a\rangle$ of length $\bigO{t^2(\size{x})}$ where $d$ encodes $t(n)$ non-deterministic choices in a computation of $M$, $q$ encodes the oracle questions asked, and $a$ encodes the answers used by $M$. Then $x \in A$ if and only if there is such a word $z = \langle d,q,a\rangle$ \stt{} $M$ accepts on the computation encoded by the choices $d$, and $a$ are actually the correct answers of the oracle $C$ to the queries in $q$.

With given $\langle x,z\rangle = \langle x,d,q,a\rangle$ the encoded computation of $M$ on the path $d$ can be simulated deterministically in time polynomial in $\size{z}$. With given $\langle x,z\rangle$, also the problem of determining whether the answers $a$ for the queries $q$ are correct for the oracle $C$ is in $\P^C \subseteq \P^{\SigmaP{k-1}} \subseteq \SigmaP{k}$. Therefore the set of all tuples $\langle x,z\rangle$ which fulfill both properties, call it $C'$, is in $\SigmaP{k}$.
By the quantifier characterization of $\SigmaP{k}$ it holds that $\langle x,z\rangle \in C'$ if and only if $\exists y_1 \ldots \Game_k y_k \; : \; \langle x,z,y_1,\ldots,y_k\rangle \in B$ for some set $B \in \P$ and polynomially bounded, alternating quantifiers \cite{polyH,Wrathall}.
But then $x \in A \Leftrightarrow \exists \langle z,y_1\rangle \forall y_2 \ldots \Game_k y_k :  \langle x,z,y_1,y_2,\ldots,y_k\rangle \in B$ quantifiers with length bounded exponentially in $\size{x}$.
\end{proof}

We next state the known correspondence between the classes of the exponential hierarchy (which are defined via oracle machines) and the alternating time classes by the following lemma. It can be seen as the exponential equivalent of \Cref{thm:ph_by_alternations}.

\begin{lemma}\label{thm:alternating_exp_classes}
For all $k \geq 1$:
\begin{align*}
\SigmaE{k} &= \bigcup_{t \in 2^{n^{\bigO{1}}}} \ATIMES{t, k}\text{,}\\
\PiE{k} &= \bigcup_{t \in 2^{n^{\bigO{1}}}} \ATIMEP{t, k}\text{.}
\end{align*}
\end{lemma}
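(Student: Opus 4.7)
My plan is to deduce both equalities from \Cref{thm:eh_by_quantifiers} combined with a straightforward simulation argument in each direction; the two statements are symmetric via complementation, so I would focus on the $\Sigma$-case.

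For the inclusion $\SigmaE{k} \subseteq \bigcup_{t \in 2^{n^{\bigO{1}}}} \ATIMES{t,k}$, I would take $A \in \SigmaE{k}$ and invoke \Cref{thm:eh_by_quantifiers} to obtain a $B \in \P$ and a bound $t \in 2^{n^{\bigO{1}}}$ such that $x \in A$ iff $\exists y_1 \forall y_2 \cdots \Game_k y_k\colon \langle x, y_1,\ldots, y_k\rangle \in B$ with each $|y_i| \leq t(|x|)$. A $\Sigma$-ATM for $A$ first writes a guess $y_1$ of length at most $t(|x|)$ in existential mode (all in one block, so no alternation is spent), then switches to universal mode to guess $y_2$, and so on, using exactly $k$ alternations in total. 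After all $k$ guess-blocks it deterministically evaluates $\langle x, y_1,\ldots, y_k\rangle \in B$ in time polynomial in $|x|+|y_1|+\cdots+|y_k|$, which is still in $2^{n^{\bigO{1}}}$. This yields a $\Sigma$-ATM with $k$ alternations and exponential runtime.

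For the converse $\bigcup_{t \in 2^{n^{\bigO{1}}}} \ATIMES{t,k} \subseteq \SigmaE{k}$, I would take a $\Sigma$-ATM $M$ with runtime $t(n) \in 2^{n^{\bigO{1}}}$ and $k$ alternations, and code its computation by $k$ strings $y_1,\ldots,y_k$, where $y_i$ records the sequence of non-deterministic choices $M$ makes during the $i$-th phase (between the $(i-1)$-st and $i$-th alternation). Each $y_i$ has length at most $t(n) \in 2^{n^{\bigO{1}}}$. Given $\langle x, y_1,\ldots,y_k\rangle$, checking that the combined choices lead to an accepting configuration can be done deterministically in time polynomial in $|x|+|y_1|+\cdots+|y_k|$, hence the verification predicate lies in $\P$. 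Because $M$ is a $\Sigma$-machine, the induced quantifier pattern is $\exists y_1 \forall y_2 \cdots \Game_k y_k$, so \Cref{thm:eh_by_quantifiers} gives $A \in \SigmaE{k}$.

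The $\Pi$-case is dual: start a $\Pi$-ATM with a universal block and swap the roles of existential and universal quantifiers throughout. The main routine obstacle is the bookkeeping of alternation counts—ensuring that a single block of consecutive moves of the same quantifier type really does not introduce extra alternations, and that the initial configuration is counted consistently with the convention fixed after \Cref{thm:ph_by_alternations}—but nothing beyond adapting the classical polynomial-hierarchy argument of Chandra, Kozen and Stockmeyer is required.
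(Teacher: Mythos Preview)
Your proposal is correct. For the inclusion $\SigmaE{k}\subseteq\bigcup_t\ATIMES{t,k}$ your argument is exactly the paper's: guess the exponentially long words $y_1,\dots,y_k$ in alternating phases and then verify membership in $B$ deterministically.

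For the converse inclusion you take a genuinely different, more direct route than the paper. The paper argues via a \emph{padding} trick: from $A\in\ATIMES{t,k}$ it forms $B=\{\langle x,0^{t(|x|)}\rangle\mid x\in A\}$, observes that $B\in\SigmaP{k}$ by \Cref{thm:ph_by_alternations}, invokes the classical polynomial-hierarchy quantifier characterisation of $\SigmaP{k}$ to obtain a $\P$-predicate $C$, and then folds the padding $0^{t(|x|)}$ back into the first existential quantifier to get exponentially bounded quantifiers over a new $C'\in\P$, finally applying \Cref{thm:eh_by_quantifiers}. Your argument instead encodes the nondeterministic choices of each alternation phase directly as the strings $y_i$ and verifies the resulting deterministic run in time polynomial in $|x|+\sum_i|y_i|$; this yields the quantifier characterisation without any detour through the polynomial hierarchy. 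Both are standard; your version is more self-contained, while the paper's version has the virtue of recycling the already-cited $\SigmaP{k}$ result rather than redoing the Chandra--Kozen--Stockmeyer bookkeeping.
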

\begin{proof}
We show only the $\SigmaE{k}$ case as it can easily be adapted to the $\PiE{k}$ case.
For \enquote{$\subseteq$}, apply the foregoing \Cref{thm:eh_by_quantifiers}. Use an alternating machine to guess the exponentially long quantified words and check in deterministic exponential time if the resulting word is in $B$. Now to \enquote{$\supseteq$}. Let $t \in 2^{n^{\bigO{1}}}$ \stt{} $A \in \ATIMES{t,k}$. Then
$B \dfn \Set{\langle x,0^{t(\size{x})}\rangle | x \in A }$ is in $\SigmaP{k}$, therefore
\begin{align*}
x \in A &\Leftrightarrow \exists y_1 \ldots \Game_k y_k \; : \; \langle x,0^{t(\size{x})},y_1,\ldots,y_k\rangle \in C\\
&\Leftrightarrow \exists \langle y_0,y_1\rangle \forall y_2 \ldots \Game_k y_k \; : \; (y_0 = 0^{t(\size{x})}) \text{ and } \langle x,y_0,y_1,\ldots,y_k\rangle \in C\\
&\Leftrightarrow \exists \langle y_0,y_1\rangle \forall y_2 \ldots \Game_k y_k \; : \; \langle x,y_0,y_1,\ldots,y_k\rangle \in C'\\
\end{align*}
 for some $C,C' \in \P$ and alternating quantifiers which are exponentially bounded in $\size{x}$. By the previous lemma then it holds $A \in \SigmaE{k}$.
\end{proof}

Orponen gave a characterization of the exponential hierarchy via an \emph{indirect simulation} technique \cite{orponen}. He introduced it primarily due to its non-relativizing nature (while \emph{direct simulation} relativizes), however it also allows to use polynomial time machines to characterize languages with much higher complexity. Informally spoken, the whole computation of an exponential time machine is encoded into quantified oracles (instead of exponentially long words), which are then verified bit for bit, but in parallel, by an alternating oracle machine with only polynomial runtime. Baier and Wagner \cite{baier_analytic_1998} investigated more generally so-called \emph{type 0, type 1} and \emph{type 2} quantifiers, improving Orponen's result. These oracle characterizations play a major role for classifying the complexity of QBSF, the logic introduced in this paper, as they translate to the quantification of Boolean functions (which are per se exponentially large objects).

\begin{theorem}[\cite{baier_analytic_1998}]\label{thm:simulation}
Let $Q \in \{\Sigma, \Pi \}$, $k \in \mathbb{N}$. For every $L \in Q^E_{k}$ there is a polynomial $p$ and a deterministic polynomial time oracle machine $M$ \stt{}
\begin{align*}
x \in L \Leftrightarrow \; &\Game_1 A_1 \subseteq \{0,1\}^{p(\size{x})} \; \ldots\; \Game_{k} A_{k} \subseteq \{0,1\}^{p(\size{x})} \; \Game_{k+1} y \in \{0,1\}^{p(\size{x})}\\
&\text{\stt{}} \; M \text{ accepts }\langle x,y\rangle\text{ with oracle }\langle A_1, \ldots, A_{k}\rangle\text{,}
\end{align*}
where $\Game_1 = \exists$ if $Q = \Sigma$ and $\Game_1 = \forall$ if $Q = \Pi$, and $\Game_i \in \{ \exists, \forall \} \setminus \{ \Game_{i-1} \}$ for $1 < i \leq k+1$.
\end{theorem}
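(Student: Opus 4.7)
The plan is to transfer \Cref{thm:alternating_exp_classes} — which characterises $\SigmaE{k}$ by alternating quantification over exponentially long words checked in exponential time — into an equivalent characterisation by alternating quantification over oracles checked in deterministic polynomial time. This is precisely Orponen's indirect-simulation trick, adapted level by level to the alternation structure.

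First I would take $L \in \SigmaE{k}$ and use \Cref{thm:alternating_exp_classes} to fix a $\Sigma$-ATM $M$ of runtime $t(n) = 2^{q(n)}$ with $k$ alternations. Normalize $M$ so that each alternation block consists of exactly $t(n)$ steps contributing one binary nondeterministic choice, and choose a polynomial $p$ with $2^{p(n)} \geq t(n) \cdot s(n)$, where $s(n) = 2^{\bigO{q(n)}}$ bounds the configuration length. A play of block $i$ is then naturally identified with a subset $A_i \subseteq \{0,1\}^{p(\size{x})}$ via its characteristic string; more precisely, $A_i$ encodes the sequence of choices made in block $i$, and the last oracle $A_k$ additionally encodes the sequence of configurations visited along the resulting path. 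The prefix $\Game_1 A_1 \ldots \Game_k A_k$ thus mirrors the alternation structure of $M$ exactly.

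The main obstacle is to verify, in deterministic polynomial time with oracle access to $(A_1,\ldots,A_k)$, that the encoded run is actually an accepting computation of $M$ on $x$: the trace has exponential length and cannot be simulated directly. Orponen's trick is to check only a single transition at a time. The final quantifier $\Game_{k+1} y$ ranges over a polynomial-length index encoding a time step together with a position inside a configuration; the deterministic polynomial-time oracle machine reads the few bits of $A_k$ that describe that position and its successor, reads the corresponding choice bit from the appropriate $A_i$, and checks a single transition rule of $M$ plus boundary conditions (initial configuration at time $0$, accepting state reached by time $t(n)$). Because \Cref{thm:simulation} forces $\Game_{k+1}$ to have parity opposite to $\Game_k$, the innermost test is a universal scan over positions exactly when $\Game_k = \exists$, which is the case in which the existential strategy needs to be witnessed as truly consistent; the opposite parity is handled by the dual argument between $\SigmaE{k}$ and $\PiE{k}$. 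I expect this parity bookkeeping and the precise layout of configurations inside $A_k$ to be the technical heart of the proof.

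For the converse direction, an alternating exponential-time machine of the appropriate type guesses and verifies the bits of each $A_i$ one by one inside its $i$-th alternation block (of length $2^{p(n)}$), and afterwards deterministically runs the polynomial-time machine of the statement on $\langle x,y\rangle$ for the chosen $y$, which fits easily in exponential time. The $\PiE{k}$ case follows by exchanging the roles of existential and universal quantifiers throughout. The reduction of the alternation structure itself is then routine; all the real work sits in the local check described above.
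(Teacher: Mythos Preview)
The paper does not prove this theorem: it is quoted with citation to Baier and Wagner and used as a black box, so there is no argument in the paper to compare your proposal against.

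That said, your sketch is precisely the indirect-simulation technique the paper itself attributes to Orponen (and its refinement by Baier--Wagner): encode each alternation block of an exponential-time ATM as a quantified oracle, store the configuration trace inside one of them, and use the trailing word quantifier to perform a local one-step check. This is the standard route to the result. The one genuinely loose spot is the parity case $\Game_k = \forall$: the configuration trace is a function of \emph{all} the choices and must be existentially witnessed, so it cannot simply be packed into a universally quantified $A_k$ as you suggest. Your appeal to a dual argument via $\PiE{k}$ does repair this, but you should spell it out rather than gesture at it: pass to the complement $\overline{L} \in \PiE{k}$, where for the relevant parity the innermost oracle quantifier becomes existential, carry out the construction there, and then negate back (replacing the deterministic verifier by one that flips its answer). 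With that step made explicit the argument is sound and in line with the literature.
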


For sets $A_1, \ldots, A_k$ the term $\langle A_1, \ldots, A_k\rangle \dfn \Set{ \langle i, x\rangle | x \in A_i, 1 \leq i \leq k }$ is called \emph{efficient disjoint union} in this context. It allows the machine to access an arbitrary number of oracles in its computations by writing down the corresponding oracle index together with the query.

The following is a variant of the above theorem where the number $k$ of alternations is not fixed but polynomial in the input size:

\begin{theorem}[\cite{hannula_complexity_2015}]\label{thm:simulationpoly}
For every set $L \in \AEXPPOLY$ there is a polynomial $p$ and a deterministic polynomial time oracle machine $M$ \stt{}
\begin{align*}
x \in L \Leftrightarrow \;&\Game_1 A_1 \subseteq \{0,1\}^{p(\size{x})} \; \ldots\; \Game_{p(\size{x})} A_{p(\size{x})} \subseteq \{0,1\}^{p(\size{x})}\\
&\Game_1 y_1 \,\,\in \{0,1\}^{p(\size{x})} \; \ldots\; \Game_{p(\size{x})} y_{p(\size{x})} 	\,\,\in \{0,1\}^{p(\size{x})}\\
&\text{\stt{}} \; M \text{ accepts }	\langle x, y_1, \ldots, y_{p(\size{x})}\rangle \text{ with oracle }\langle A_1, \ldots, A_{p(\size{x})}\rangle\text{,}
\end{align*}
where $\Game_1 \ldots \Game_{p(\size{x})}$ is an alternating quantifier sequence.
\end{theorem}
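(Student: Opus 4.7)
My plan is to extend the oracle-based indirect simulation technique of Theorem~\ref{thm:simulation} from a fixed to a polynomial number of alternations, along the lines of \cite{hannula_complexity_2015}. For the direction ``$\Rightarrow$'', I would take $L \in \AEXPPOLY$; by definition $L$ is decided by some $\Sigma$-ATM $M$ of time $t(n) = 2^{p(n)}$ with at most $p(n)$ alternations, for some polynomial $p$, which after padding I may assume to equal the exact number of alternations used on every input. The computation of $M$ on $x$ decomposes into $p(\size{x})$ consecutive phases; the $i$-th phase is a sequence of at most $t(\size{x})$ transitions in an existential or a universal state, alternating in $i$. The nondeterministic choices made in phase $i$ form a word $w_i \in \{0,1\}^{t(\size{x})}$.

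I would then encode both $w_i$ and the phase-$i$ slice of the computation tableau of $M$ as a single characteristic set $A_i \subseteq \{0,1\}^{p(\size{x})}$, enlarging $p$ if necessary so that a $(\mathit{time}, \mathit{position}, \mathit{tag})$ triple fits into $p(\size{x})$ bits. Quantifying $\Game_1 A_1 \cdots \Game_{p(\size{x})} A_{p(\size{x})}$ in the alternation pattern that mirrors $M$ captures the choices of every phase. Since simulating these choices would still require exponential time for $M$, the polynomially-long word quantifiers $y_1, \ldots, y_{p(\size{x})}$ take over the role of verification: I would have each $y_i$ range over addresses into the tableau in the same alternating pattern, so that the $\forall$-blocks of the $y$-sequence let a refuter point to any cell where local consistency might fail, while the $\exists$-blocks let a prover supply auxiliary indices (such as counters into the choice sequence encoded by $A_i$). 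The deterministic oracle machine $M'$ then makes \bigO{1} queries to the $A_i$ at addresses derived from $x$ and the $y_i$, and checks a Boolean combination of elementary cell-consistency conditions which, taken together, force (i) the initial configuration to match $x$, (ii) every transition to agree with the $w_i$ extracted from $A_i$, and (iii) the final configuration to be accepting.

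The main obstacle is precisely this reduction from global to local tableau correctness: it demands an encoding in which every cell of an exponentially large tableau is polynomially addressable and in which any violation can be exposed by a constant-sized local check, together with a quantifier structure over the $y_i$ that correctly globalises the local checks into the required alternating statement. A standard parity-padding of $p$ then ensures the overall concatenated sequence of set- and word-quantifiers remains alternating at the seam between the two blocks. The converse direction ``$\Leftarrow$'' is then routine: an alternating machine can guess each $A_i$ bit by bit during its $i$-th alternation block (in $2^{p(n)}$ steps), then guess the $y_i$ in their own alternation blocks, and finally run $M'$ deterministically, keeping the overall time in $2^{n^{\bigO{1}}}$ and the number of alternations in $n^{\bigO{1}}$, witnessing $L \in \AEXPPOLY$.
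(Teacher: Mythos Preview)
The paper does not give its own proof of this theorem: it is stated with a citation to \cite{hannula_complexity_2015} and used as a black box, so there is no in-paper argument to compare your proposal against. What the paper \emph{does} prove in detail is the related single-query variant (\Cref{thm:single-query}), and it is instructive to compare your sketch to that.

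Your high-level plan---encode each alternation phase of the $\AEXPPOLY$ machine as a tableau inside the corresponding $A_i$, then verify local consistency with a polynomial-time oracle machine---is exactly the ``indirect simulation'' idea the paper attributes to Orponen and extends in \Cref{thm:single-query}. In that proof, however, the verification is \emph{not} driven by the word quantifiers $y_i$: instead the paper designs a $\SigmaP{4}$ verifier that, given access to $\langle A_1,\ldots,A_m\rangle$, internally guesses an index $i$ and a predicate from a fixed menu $\{\mathrm{Val}_j,\mathrm{Init}_j,\mathrm{Alt}_j,\overline{\mathrm{Val}_i},\overline{\mathrm{Init}_i}\}$ and checks it with two further alternations and a single oracle query. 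The key structural observation there is that the big nested implication $V_1$ unfolds into ``$\exists i\,\exists d$ such that every predicate in $\calS^d_i$ holds and $\Game_i=\forall$'', which is what makes a bounded number of alternations in the verifier suffice.

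Your proposal instead pushes the verification work onto the $y_i$ themselves, using the $\forall$-blocks to address potential violations and the $\exists$-blocks for auxiliary witnesses. That is a plausible alternative route, but the part you flag as ``the main obstacle'' is genuinely the crux and is not yet discharged: you must show that the \emph{fixed} alternation pattern $\Game_1 y_1\cdots\Game_{p(\size{x})} y_{p(\size{x})}$ (identical to the $A$-pattern, as the statement demands) can express the required global-to-local check for \emph{every} $A_i$ simultaneously, including the existentially quantified ones where no matching $\forall y_i$ is available at the same level. The paper's approach sidesteps this by absorbing all such bookkeeping into a constant number of alternations of the verifier itself; if you want to keep your route, you will need an explicit argument (e.g.\ using only a constant-length initial segment of the $y$-block, with parity padding) that the required $\Sigma/\Pi$ shape is attainable regardless of $\Game_1$.
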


Obviously each quantified word can be efficiently encoded in its own additional oracle. There are only polynomially many quantified words, so in the unbounded case we can drop the word quantifiers completely.

These characterizations all have tight upper bounds. Suppose that a language is characterized by such a sequence of quantified oracles. Then conversely an alternating machine can non-deterministically guess the oracle sets with runtime exponential in $p$ and then simulate $M$ including the word quantifiers in deterministic exponential time.
Together with \Cref{thm:alternating_exp_classes} we obtain:

\begin{corollary}\label{thm:alt_seq_poly}
$L \in \AEXPPOLY$ if and only if there is a polynomial $p$ and an deterministic polynomial time oracle machine $M$ \stt{} $x \in L$ iff
\begin{align*}
\Game_1 A_1 \subseteq \{0,1\}^{p(\size{x})} \;\ldots\; &\Game_{p(\size{x})} A_{p(\size{x})} \subseteq \{0,1\}^{p(\size{x})}\; :\;\\
&M \text{ accepts } x \text{ with oracle } \langle A_1, \ldots, A_{p(\size{x})}\rangle\text{,}
\end{align*}
where $\Game_1, \ldots, \Game_{p(\size{x})}$ is an alternating sequence of quantifiers.
\end{corollary}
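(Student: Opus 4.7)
My plan is to prove both implications. For \enquote{$\Rightarrow$}, I start from \Cref{thm:simulationpoly} and absorb the polynomially many word quantifiers into additional oracle quantifiers, obtaining a purely alternating oracle prefix. For \enquote{$\Leftarrow$}, I construct an alternating exponential-time machine with polynomially many alternations that realises the oracle-quantifier characterisation, and conclude membership in $\AEXPPOLY$ directly from its definition.

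For the forward direction, let $L \in \AEXPPOLY$ and apply \Cref{thm:simulationpoly} to obtain a polynomial $p$ and a deterministic polynomial time oracle machine $M$ witnessing the quantifier characterisation with $p(\size{x})$ oracle quantifiers followed by $p(\size{x})$ word quantifiers. I encode each word $y_i \in \{0,1\}^{p(\size{x})}$ as a fresh oracle $B_i \subseteq \{0,1\}^{p(\size{x})}$ by fixing $p(\size{x})$ canonical query strings $s_1,\ldots,s_{p(\size{x})}$ and reading the $j$-th bit of $y_i$ as the indicator of $s_j \in B_i$. This decoding is surjective onto $\{0,1\}^{p(\size{x})}$, so replacing $\Game_i y_i$ by $\Game_i B_i$ preserves the truth value once $M$ is modified to first reconstruct each $y_i$ with $p(\size{x})$ oracle queries and only then simulate the original computation. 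The outcome is a prefix of $2p(\size{x})$ oracle quantifiers; if this fails to be strictly alternating at the junction between the two blocks (which happens precisely when $p(\size{x})$ is odd), I repair it by inserting one dummy oracle or by merging the two adjacent same-type quantifiers via the efficient disjoint union, both of which keep the total number of quantifiers polynomial.

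For the backward direction, suppose $L$ satisfies the stated characterisation with polynomial $q$ and deterministic polynomial time oracle machine $M$. I construct an alternating Turing machine $N$ which processes the quantifier prefix from outside to inside: for each block $i$, $N$ enters an existential or universal state according to $\Game_i$ and non-deterministically writes the $2^{q(\size{x})}$ bits of the characteristic function of $A_i$ onto a work tape. After all $q(\size{x})$ oracles are materialised, $N$ simulates $M$ deterministically on $x$, answering queries by table lookup on the stored bitmaps. The running time is of order $q(\size{x}) \cdot 2^{q(\size{x})}$, which is exponential in $\size{x}$, with exactly $q(\size{x})$ alternations, so $L \in \ATIMES{2^{q(\size{x})}, q(\size{x})}$ and hence $L \in \AEXPPOLY$.

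The main obstacle is a bookkeeping one in the $\Rightarrow$ direction: ensuring that the concatenated prefix formed by the original oracle quantifiers together with the newly introduced ones (replacing the word quantifiers) is a genuinely alternating sequence. The fix is purely syntactic and adds at most one further quantifier, so the polynomial bound on the quantifier count is preserved throughout.
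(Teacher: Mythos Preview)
Your proposal is correct and follows essentially the same route as the paper: the forward direction invokes \Cref{thm:simulationpoly} and absorbs the word quantifiers into additional oracle quantifiers, while the backward direction has an alternating exponential-time machine guess the truth tables of the quantified oracles and simulate $M$ deterministically. You spell out more carefully than the paper the encoding of words into oracles and the alternation bookkeeping at the junction between the two quantifier blocks, but this is added detail rather than a different argument.
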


\begin{corollary}
For all $k \geq 1$, $L \in \SigmaE{k}$ if and only if there is a polynomial $p$ and a deterministic polynomial time oracle machine $M$ \stt{} $x \in L$ iff
\begin{align*}
\exists A_1 \subseteq \{0,1\}^{p(\size{x})} \;\ldots\; &\Game_{k} A_{k} \subseteq \{0,1\}^{p(\size{x})}\; \Game_{k+1} y \in \{0,1\}^{p(\size{x})}\;:\;\\
&M \text{ accepts } \langle x,y\rangle \text{ with oracle } \langle  A_1, \ldots, A_{k}\rangle\text{,}
\end{align*}
where $\Game_1 = \exists, \ldots, \Game_{k+1}$ is an alternating sequence of quantifiers.
\end{corollary}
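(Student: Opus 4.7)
The forward direction ``$\Rightarrow$'' is an immediate specialisation of \Cref{thm:simulation} to $Q = \Sigma$, so nothing new needs to be done there. For the converse ``$\Leftarrow$'', my plan is to turn the oracle characterisation into a $\Sigma$-ATM and then invoke \Cref{thm:alternating_exp_classes}. Concretely, I will build a $\Sigma$-ATM $N$ of runtime $2^{n^{O(1)}}$ that uses exactly $k$ alternations in the paper's sense. In the $i$-th alternation block ($1 \leq i \leq k$), $N$ is in mode $\Game_i$ and nondeterministically guesses a bit string of length $2^{p(\size{x})}$ encoding the characteristic function of a set $A_i \subseteq \{0,1\}^{p(\size{x})}$. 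Across the $k$ blocks this produces guesses for $A_1, \ldots, A_k$ with the correct alternation pattern $\exists\forall\exists\ldots$

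After the $k$-th guessing block, $N$ stays inside that block and, deterministically, evaluates the remaining statement ``$\Game_{k+1} y \in \{0,1\}^{p(\size{x})} : M$ accepts $\langle x, y \rangle$ with oracle $\langle A_1, \ldots, A_k\rangle$.'' Since $y$ ranges over only $2^{p(\size{x})}$ strings and each run of $M$ takes polynomial time (oracle queries being resolved by $O(p(\size{x}))$-bit direct-address lookups into the bit-string tapes that hold the $A_i$), this evaluation runs in deterministic time $2^{O(p(\size{x}))}$. Summing with the $k$ guessing phases, $N$ decides $L$ in overall exponential time with exactly $k$ alternations, and \Cref{thm:alternating_exp_classes} then yields $L \in \SigmaE{k}$.

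The one genuine subtlety, and the point I expect to be the main obstacle, is arguing that the final word quantifier $\Game_{k+1}$ does \emph{not} force an extra alternation beyond the $k$ needed for the oracles. This is enabled precisely because $y$ has polynomial length: a quantifier over a polynomial-length word collapses into a deterministic exponential-time enumeration that fits inside a single alternation block (existential/universal branching can be replaced by a sequential for-loop that accepts/rejects accordingly). If $y$ were instead exponentially long, as the $A_i$ are, we would genuinely need a $(k+1)$-st alternation and the characterisation would match $\SigmaE{k+1}$ rather than $\SigmaE{k}$. Once this collapse is justified, the rest is routine bookkeeping: checking that the exponential-length bit strings really do implement the oracle semantics of $\langle A_1, \ldots, A_k\rangle$ via direct-address lookup, and that the total runtime stays in $2^{n^{O(1)}}$.
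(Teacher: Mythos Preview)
Your proposal is correct and follows essentially the same route as the paper. The forward direction is exactly the specialisation of \Cref{thm:simulation} you identify, and for the converse the paper likewise argues that an alternating exponential-time machine guesses the oracle sets in $k$ alternation blocks and then absorbs the remaining word quantifier into a deterministic exponential-time simulation of $M$, finally invoking \Cref{thm:alternating_exp_classes}; you have simply made the ``collapse'' of the $(k+1)$-st quantifier more explicit than the paper does.
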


\begin{corollary}\label{thm:alt_seq_pik}
For all $k \geq 1$, $L \in \PiE{k}$ if and only if there is a polynomial $p$ and a deterministic polynomial time oracle machine $M$ \stt{} $x \in L$ iff
\begin{align*}
\forall A_1 \subseteq \{0,1\}^{p(\size{x})} \;\ldots\; &\Game_{k} A_{k} \subseteq \{0,1\}^{p(\size{x})}\; \Game_{k+1} y \in \{0,1\}^{p(\size{x})}\;:\;\\
&M \text{ accepts } \langle x,y\rangle \text{ with oracle } \langle  A_1, \ldots, A_{k}\rangle\text{,}
\end{align*}
where $\Game_1 = \forall, \ldots, \Game_{k+1}$ is an alternating sequence of quantifiers.
\end{corollary}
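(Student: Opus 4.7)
The forward direction ($\Rightarrow$) is an immediate instance of \Cref{thm:simulation}: setting $Q = \Pi$ in that theorem yields exactly the required deterministic polynomial-time oracle machine $M$ together with the alternating prefix starting with $\forall A_1$.

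For the backward direction ($\Leftarrow$) the plan is, given such an oracle characterisation of $L$, to construct a $\Pi$-ATM that decides $L$ in exponential time with at most $k$ alternations; by \Cref{thm:alternating_exp_classes} this places $L$ in $\PiE{k}$. Each oracle quantifier $\Game_i A_i$ is realised by branching (universally if $\Game_i = \forall$, existentially otherwise) over all $2^{p(\size{x})}$-bit strings describing $A_i \subseteq \{0,1\}^{p(\size{x})}$. Since $\Game_1 = \forall$ and the $\Game_i$ strictly alternate, the transitions between consecutive oracle blocks consume exactly $k$ alternations in total for $A_1, \ldots, A_k$.

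To handle the final word quantifier $\Game_{k+1} y$ without paying one further alternation, I fold it into the deterministic verification phase: once the oracles have been fixed on the current branch, the machine deterministically enumerates every $y \in \{0,1\}^{p(\size{x})}$, simulates the polynomial-time oracle machine $M$ on each pair $\langle x, y\rangle$ with oracle $\langle A_1, \ldots, A_k\rangle$, and combines the resulting bits by a conjunction (if $\Game_{k+1} = \forall$) or a disjunction (if $\Game_{k+1} = \exists$). Since $M$ runs in polynomial time and $y$ ranges over only exponentially many values, this stays within deterministic exponential time, so the whole ATM runs in exponential time with exactly $k$ alternations.

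The one delicate point is precisely this absorption of the last, polynomially bounded word quantifier into the deterministic exponential simulation phase: without it naïve counting would produce $k+1$ alternations and only the weaker conclusion $L \in \PiE{k+1}$. All remaining steps are routine and mirror the \SigmaE{k} argument of the preceding corollary.
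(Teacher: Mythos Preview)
Your proof is correct and follows essentially the same approach as the paper: the forward direction is an instance of \Cref{thm:simulation}, and for the converse the paper likewise argues that an alternating machine guesses the oracle sets in exponential time and then ``simulate[s] $M$ including the word quantifiers in deterministic exponential time,'' invoking \Cref{thm:alternating_exp_classes}. Your explicit remark about absorbing the final word quantifier into the deterministic phase to avoid a spurious $(k{+}1)$-st alternation makes precise the step the paper leaves implicit.
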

 
\section{Second-order QBF}

In this section the logic QBSF is introduced formally. It is a straightforward generalization of usual QBF to include function variables; it could be interpreted as a "second-order" extension: It behaves similarly to plain QBF as second-order logic behaves to first-order logic.

\begin{definition}[Syntax of QBSF]
The constants $1$ and $0$ are \emph{quantified Boolean second-order formulas (qbsfs)} . If $f^n$ is a function symbol of arity $n \geq 0$ and $\varphi_1, \ldots, \varphi_n$ are qbsfs, then $f^n(\varphi_1, \ldots, \varphi_n)$, $\varphi_1 \land \varphi_2$, $\neg \varphi_1$ and $\exists f^n \varphi_1$ are all qbsfs.
\end{definition}

Abbreviations like $\varphi \lor \psi$, $\varphi \rightarrow \psi$, $\varphi \leftrightarrow \psi$ and $\forall f^{n} \,\psi$ can be defined from this as usual. In this setting, propositions can be understood as functions of arity zero. If the arity of a symbol is clear or does not matter we drop the indicator from now on. Furthermore a sequence $x_1, \ldots, x_s$ of variables can be abbreviated as $\vec{x}$ if the number $s$ does not matter, $\exists \vec{x}$ meaning $\exists x_1 \ldots \exists x_n$ and so on.
For practical reasons we transfer the terms \emph{first-order variable} and \emph{second-order variable} to the Boolean realm when referring to functions of arity zero resp. greater than zero.

\begin{definition}[Semantics of QBSF]
An \emph{interpretation} $\calI$ is a map from function variables $f^n$ to $n$-ary Boolean functions. A function variable occurs \emph{freely} if it is not in the scope of a matching quantifier. Write $\fr(\varphi)$ for all free variables in the qbsf $\varphi$. For $\calI$ that are defined on $\fr(\varphi)$, write $\llbracket \varphi \rrbracket_\calI$ for the valuation of $\varphi$ in $\calI$, which is defined as
\begin{alignat*}{2}
&\llbracket c \rrbracket_\calI &&\dfn c\text{ for }c \in \{0, 1\}\\
&\llbracket \varphi \land \psi \rrbracket_\calI &&\dfn \llbracket \varphi \rrbracket_\calI \cdot \llbracket \psi \rrbracket_\calI\\
&\llbracket \neg \varphi \rrbracket_\calI &&\dfn 1 - \llbracket \varphi \rrbracket_\calI\\
&\llbracket f^n(\varphi_1,\ldots,\varphi_n) \rrbracket_\calI &&\dfn \calI(f^n)(\llbracket \varphi_1 \rrbracket_\calI, \ldots, \llbracket \varphi_n \rrbracket_\calI)\\
&\llbracket \exists f^n \varphi\rrbracket_\calI &&\dfn \max\Set{\llbracket\varphi\rrbracket_{\calI[f^n \mapsto F]} | F \colon \{0,1\}^n \to \{0,1\}}
\end{alignat*}
where $\calI[f^n \mapsto F]$ is the interpretation \stt{} $\calI[f^n \mapsto F](f^n) = F$ and $\calI[f^n \mapsto F](g^m) = \calI(g^m)$ for $g^m \neq f^n$.
\end{definition}

Write $\calI \models \varphi$ for a qbsf $\varphi$ if $\calI$ is defined on $\fr(\varphi)$ and $\llbracket \varphi \rrbracket_\calI = 1$.
Say that $\varphi$ \emph{entails} $\psi$, $\varphi \models \psi$, if $\calI \models \varphi \Rightarrow \calI \models \psi$ for all interpretations $\calI$ which are defined on $\fr(\varphi) \cup \fr(\psi)$. If $\varphi \models \psi$ and $\psi \models \varphi$, then $\varphi$ and $\psi$ are called \emph{equivalent}, in symbols $\varphi \equiv \psi$.

\begin{lemma}\label{thm:invariant}
The set of interpretations satisfying a qbsf $\varphi$ is invariant under substitution of equivalent subformulas in $\varphi$.
\end{lemma}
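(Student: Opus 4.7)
The plan is to prove a slightly stronger statement by structural induction: for every qbsf $\chi$, every subformula occurrence of $\psi$ inside $\chi$, and every interpretation $\calI$ defined on $\fr(\chi)\cup\fr(\psi')$, if $\psi \equiv \psi'$ and $\chi'$ is the result of replacing the chosen occurrence of $\psi$ by $\psi'$, then $\llbracket \chi \rrbracket_\calI = \llbracket \chi' \rrbracket_\calI$. The statement of the lemma then follows by specialising to $\chi = \varphi$ and invoking the equality on both sides of $\llbracket \cdot \rrbracket_\calI = 1$.

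The base of the induction is the case where the chosen occurrence is all of $\chi$ itself; then $\chi = \psi$ and $\chi' = \psi'$, and the claim is precisely $\psi \equiv \psi'$ unpacked (using that $\calI$ is defined on $\fr(\psi)\cup\fr(\psi')$). If $\chi$ is a constant or a zero-ary function symbol and the occurrence of $\psi$ is proper, there is nothing to prove because there is no proper subformula. The cases $\chi = f^n(\chi_1,\ldots,\chi_n)$, $\chi = \chi_1 \land \chi_2$ and $\chi = \neg \chi_1$ are immediate from the compositional clauses in the definition of $\llbracket \cdot \rrbracket_\calI$: the replacement occurs in exactly one $\chi_i$, the induction hypothesis gives $\llbracket \chi_i \rrbracket_\calI = \llbracket \chi_i' \rrbracket_\calI$, and the relevant Boolean operation applied to the resulting values is the same.

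The only step that requires some care is the quantifier case $\chi = \exists f^n \chi_1$. Here the replacement takes place inside $\chi_1$, and by definition
\[
\llbracket \exists f^n \chi_1 \rrbracket_\calI = \max\Set{ \llbracket \chi_1 \rrbracket_{\calI[f^n \mapsto F]} | F\colon \{0,1\}^n \to \{0,1\}}.
\]
Since equivalence $\psi \equiv \psi'$ holds \emph{for every} interpretation defined on $\fr(\psi) \cup \fr(\psi')$, it holds in particular for each extension $\calI[f^n \mapsto F]$, which is the only new ingredient the induction hypothesis needs. Applying the hypothesis termwise inside the $\max$ gives $\llbracket \chi_1 \rrbracket_{\calI[f^n \mapsto F]} = \llbracket \chi_1' \rrbracket_{\calI[f^n \mapsto F]}$ for every $F$, hence the two maxima agree, and $\llbracket \chi \rrbracket_\calI = \llbracket \chi' \rrbracket_\calI$ as required.

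The main (minor) obstacle is precisely this quantifier step, and specifically the bookkeeping that $\calI[f^n \mapsto F]$ is still defined on $\fr(\chi_1) \cup \fr(\chi_1')$ so that the induction hypothesis applies; this is straightforward because going from $\chi_1$ to $\chi = \exists f^n \chi_1$ can only remove $f^n$ from the free variables, and the extension $\calI[f^n \mapsto F]$ fixes exactly that variable. Everything else is a routine unwinding of the semantic clauses.
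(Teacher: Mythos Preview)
Your proposal is correct and is precisely the structural induction the paper has in mind; the paper's own proof consists of the single line ``Proven by simple induction.'' You have simply spelled out the cases (including the only nontrivial one, the quantifier step with the extended interpretation $\calI[f^n \mapsto F]$) that the paper leaves implicit.
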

\begin{proof}
Proven by simple induction.
\end{proof}

Write $\QBSF$ for the set of all qbsfs $\varphi$ for which $\emptyset \models \varphi$ holds, \ie, $\varphi$ is satisfied by the empty interpretation.

If in a formula $\varphi$ all quantifiers are at the beginning of $\varphi$, then it is in \emph{prenex form}.
A second-order qbf is \emph{simple} if all function symbols have only propositions as arguments.
It is in \emph{conjunctive normal form (CNF)} if it is in prenex form, simple and the matrix is in propositional CNF. Analogously define \emph{disjunctive normal form (DNF)}.

\begin{theorem}\label{thm:soqbf-in-aexp}
$\QBSF \in \AEXPPOLY$.
\end{theorem}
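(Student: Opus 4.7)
My plan is to build an ATM that reads the QBSF semantics directly: each existential or universal quantifier becomes a corresponding nondeterministic branch guessing the full truth table of a Boolean function, and the Boolean connectives are handled deterministically once all function symbols are interpreted. Because function arities are polynomially bounded in the formula size, a truth table is exponentially long but can still be written in exponential time, and because the total number of quantifiers is polynomial, the number of alternations stays polynomial as required by $\AEXPPOLY$.

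First, I would preprocess the input qbsf $\varphi$ into prenex form $Q_1 f_1^{n_1} \cdots Q_k f_k^{n_k}\,\psi$ with quantifier-free matrix $\psi$. The standard prenex rewriting---$\alpha$-renaming bound function symbols to avoid capture, then pushing quantifiers outward past $\neg$ (flipping the quantifier type), $\land$, and the derived connectives---produces a logspace-computable equivalent qbsf whose satisfaction is preserved by \Cref{thm:invariant}. The resulting formula has size polynomial in $|\varphi|$, so in particular $k \le |\varphi|$ and every arity $n_i \le |\varphi|$.

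Next I would describe the alternating machine $M$. For $i = 1, \dots, k$ in order, $M$ enters an existential or universal state according to $Q_i$ and branches on all $n_i$-ary Boolean functions by writing the corresponding truth table $F_i$, of $2^{n_i}$ bits, onto a dedicated tape. Each table has length $2^{n_i} \le 2^{|\varphi|}$, so the entire guessing phase runs in time $2^{O(|\varphi|)}$. Once all $F_i$ are fixed, $M$ deterministically evaluates $\psi$ by recursion on its structure: every atom $f_i^{n_i}(\psi_1,\dots,\psi_{n_i})$ is resolved by first evaluating the $\psi_j$'s and then looking up the appropriate entry of the table for $f_i$. Because $|\psi| \le |\varphi|$ is polynomial and each lookup takes time at most $2^{O(|\varphi|)}$, this phase also runs in time $2^{O(|\varphi|)}$.

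Summing up, the total runtime of $M$ is $2^{O(|\varphi|)}$ and its alternation count is at most $k$, which is polynomial in $|\varphi|$; by definition this places $\QBSF$ in $\AEXPPOLY$. The only real subtlety is the prenex transformation: one must ensure that $\alpha$-renaming of bound function symbols does not inflate the formula and that each $\neg$ properly dualizes the quantifiers it passes. Beyond this bookkeeping, the construction is essentially a direct reading of the QBSF semantic clauses through the lens of alternation, so I do not anticipate a significant obstacle.
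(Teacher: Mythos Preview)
Your proposal is correct and follows essentially the same approach as the paper: convert to prenex form, use alternation to guess the truth tables of the quantified functions in prefix order, then evaluate the quantifier-free matrix deterministically in exponential time. The paper's proof is terser but structurally identical; your remarks on $\alpha$-renaming and on pushing quantifiers through the connectives merely spell out what the paper leaves implicit.
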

\begin{proof}
First transform $\varphi$ into prenex form $\varphi'$ in polynomial time.
Evaluate $\varphi'$ by alternating between existentially and universally states for each quantifier alternation, and guess and write down the truth tables for the quantified Boolean functions. These functions have arity at most $\size{\varphi}$, thus this whole step requires time $2^{\size{\varphi}}\cdot \size{\varphi'}$. Evaluate the matrix in deterministic exponential time by looking up the truth tables and accept if and only if it is true.
\end{proof}

\begin{theorem}\label{thm:soqbf-hardness}
$\QBSF$ in CNF or DNF is $\leqlogm$-hard for $\AEXPPOLY$.
\end{theorem}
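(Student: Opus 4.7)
The strategy is to apply \Cref{thm:alt_seq_poly}, which characterises every $L \in \AEXPPOLY$ by a polynomial $p$ and a deterministic polynomial-time oracle machine $M$ such that $x \in L$ iff $\Game_1 A_1 \subseteq \{0,1\}^{p(\size{x})} \cdots \Game_{p(\size{x})} A_{p(\size{x})} \subseteq \{0,1\}^{p(\size{x})}$ : $M$ accepts $x$ with oracle $\langle A_1,\ldots,A_{p(\size{x})}\rangle$. Given $x$, the reduction outputs a qbsf $\varphi_x$ whose prefix is $\Game_1 f_1^{p(\size{x})} \cdots \Game_{p(\size{x})} f_{p(\size{x})}^{p(\size{x})} \exists \vec{z}$, where each $f_i$ plays the role of the characteristic function of $A_i$ and $\vec{z}$ is a block of propositions encoding the polynomially long computation tableau of $M$ on $x$ (tape cells, head positions and states at every time step, including the oracle tape content and the bit returned after each query).

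The matrix is obtained by a Cook--Levin style construction: standard local clauses enforce that $\vec{z}$ encodes a valid accepting computation of $M$ on input $x$; all such clauses are of constant size and translate directly to CNF. The only non-standard part is the treatment of oracle queries. For every time step $t$ at which $M$ enters its query state $q_?$, let $\mathrm{ind}^t$ denote the propositions of $\vec{z}$ encoding the oracle index, $(y^t_1,\ldots,y^t_{p(\size{x})})$ the oracle tape content, and $a^t$ the proposition representing the bit read on the following step. We add the constraint
\[
\bigwedge_{j=1}^{p(\size{x})} \bigl( \mathrm{ind}^t {=} j \limplies (a^t \liff f_j(y^t_1,\ldots,y^t_{p(\size{x})})) \bigr)\text{,}
\]
where $\mathrm{ind}^t {=} j$ abbreviates the conjunction of literals fixing the bits of $\mathrm{ind}^t$ to $j$. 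Each conjunct is a small Boolean combination of literals with the single atom $f_j(\vec{y}^t)$ and therefore rewrites to a fixed number of CNF clauses. Since every $f_j$ occurs only with propositional arguments, $\varphi_x$ is simple, and the whole map is computable in logspace because every index ranges over polynomially many values under a uniform per-step pattern.

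Correctness follows since, under any interpretation of the $f_i$'s, the inner $\exists \vec{z}\,\Phi$ is satisfied iff a valid accepting tableau of $M$ with the corresponding oracles exists on $x$, which by determinism of $M$ is equivalent to acceptance. For the DNF case we exploit that $\AEXPPOLY$ is closed under complement: applying the construction to $\overline{L}$ yields a CNF reduction $x \mapsto \psi_x$, and $\neg \psi_x$, after propagating the negation through the prefix and matrix via De Morgan, is a prenex simple qbsf with DNF matrix satisfying $x \in L \Leftrightarrow \neg \psi_x \in \QBSF$. The main obstacle I anticipate is the oracle-query gadget: one must insert exactly the right block of existential propositional quantifiers $\exists \vec{z}$ to name each query string bitwise so that it appears as an argument to some $f_j$ without violating simplicity, while keeping the remainder of the matrix expressible as a polynomial-size CNF (or dually, DNF).
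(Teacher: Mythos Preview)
Your proposal is correct and follows essentially the same approach as the paper: quantify characteristic functions $f_i$ for the oracles from \Cref{thm:alt_seq_poly}, encode the deterministic computation of $M$ via a Cook--Levin CNF over propositional variables $\vec z$, and splice in the oracle answers as clauses tying the query bits to $f_j(\vec{y}^{\,t})$. The only notable difference is in the DNF case: the paper exploits determinism of $M$ directly, building the complementary machine $M'$ and dualising only the \emph{innermost} $\exists\vec z\,\varphi'_x$ so that the second-order prefix $\Game_1 c_1\cdots\Game_\ell c_\ell$ is preserved verbatim, whereas you invoke closure of $\AEXPPOLY$ under complement and negate the whole formula (flipping every quantifier); both are valid for the present theorem, but the paper's variant is what makes the finer prefix-sensitive bounds in \Cref{thm:no-alt-completeness} go through.
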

\begin{proof}
Let $L \in \AEXPPOLY$, where $L \subseteq \Sigma^*$ for some alphabet $\Sigma$.
Let $\text{bin}(L) \dfn \Set{ \text{bin}(x) | x \in L }$, where $\text{bin}(\cdot)$ efficiently encodes words from $\Sigma^*$ over $\{0,1\}$. As $L \leqlogm \text{bin}(L)$, we only need to consider languages $L$ over $\{0,1\}$.

By \Cref{thm:alt_seq_poly} there is a polynomial $p$ and a deterministic oracle Turing machine $M$ with polynomial runtime such that
\begin{align*}
x \in L \Leftrightarrow \;&\Game_1 A_1 \subseteq \{0,1\}^{p(\size{x})} \; \ldots\; \Game_{p(\size{x})} A_{p(\size{x})} \subseteq \{0,1\}^{p(\size{x})}\\
&\text{\stt{}} \; M \text{ accepts }x\text{ with oracle }\langle A_1, \ldots, A_{p(\size{x})}\rangle\text{,}
\end{align*}
for an alternating quantifier sequence $\Game_1, \ldots, \Game_{p(\size{x})}$. Let $\ell = p(\size{x})$.

In this reduction we will represent the oracles $A_i$ as their characteristic Boolean functions, call them $c_i$.
Translate $M$ and $x$ into a formula $\exists \vec{z} \varphi_x(\vec{z})$, where $\size{\vec{z}}$ and $\size{\varphi_x}$ both are polynomial in $\size{x}$, and where $\varphi_x$ is in CNF. The encoding can be done as in \cite{cook_complexity_1971} and is possible in logspace (iterate over each possible timestep, tape, position and transition).

In an oracle-free setting we could now claim that $\exists \vec{z} \varphi_{x}(\vec{z})$ is true if and only if $M$ accepts $x$. The oracle questions queried in transitions to the state $q_?$ however require special handling. Assume that $M$ uses only $r + m$ tape cells for the oracle questions, $r \in \bigO{\log \ell}$, $m \in \bigO{\ell}$, \ie, it writes the index of the oracle and the concrete query always on the same cells. Let the proposition $z^t_{p} \in \vec{z}$ mean that at timestep $t$ on position $p$ of the oracle tape there is a one. Then modify $\varphi_{x}$ as follows. Let a clause $C$ of $\varphi_{x}$ encode a possible transition from some state $q$ to the state $q_{?}$ in timestep $t$. If the correct oracle answer is $q_{+}$, then $z^t_{1} \ldots z^t_{r}$ must represent some number $i$ in binary, $1 \leq i \leq \ell$, and $c_i(z^t_{r + 1},\ldots, z^t_{r + m}) = 1$ must hold. For the answer $q_-$ analogously $c_i(z^t_{r + 1},\ldots, z^t_{r + m}) = 0$. Therefore any transition to $q_?$ at a timestep $t$ must be encoded not in $C$ but instead in the new clauses $C^{+}_{1}, \ldots, C^{+}_{\ell}, C^{-}_{1}, \ldots, C^{-}_{\ell}$. Every such clause $C^{+}_{i}$/$C^{-}_{i}$ contains the same literals as $C$, but additionally says that the oracle number is $i$, and contains a single second-order atom of the form $c_i(\ldots)$ or $\neg c_i(\ldots)$. The new state of the transition is then obviously changed to $q_{+}$/$q_{-}$ instead of $q_{?}$. As $\ell$ is polynomial in $\size{x}$, there are also only polynomially many cases for the oracle number. The number of arguments of the characteristic functions is exactly $m$ which is again polynomial in $\size{x}$. The logspace-computability of the new clauses is straightforward.
Altogether the second-order qbf $\Game_1 c^m_1 \ldots \Game_\ell c^m_\ell \; \exists \vec{z} \;\, \varphi_x$ is true if and only if $x \in L$.

Let us now consider the DNF case. As $M$ is deterministic, there is another deterministic oracle machine $M'$ with identical runtime which simulates $M$ including the oracle calls, but then rejects any word that is accepted by $M$ and vice versa. Let the formula $\varphi_{x}'$ be the translation of $(M',x)$ as explained before. Then $x \in L$ iff $\Game_1 c_1 \ldots \Game_\ell c_\ell \; \neg \exists \vec{z} \;\, \varphi'_x$ is true iff $\Game_1 c_1 \ldots \Game_\ell c_\ell \; \forall \vec{z} \;\, \widehat{\varphi'_x}$ is true, where $\widehat{\varphi'_x}$ is the dual formula of $\varphi'_x$ (\ie, the negation normal form of its negation) and thus in DNF.
\end{proof}
\begin{corollary}
$\QBSF$ in CNF or DNF is $\leqlogm$-complete for $\AEXPPOLY$.
\end{corollary}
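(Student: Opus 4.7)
The corollary is an immediate assembly of the two preceding results, so the plan is essentially a one-line argument with a sanity check on the upper bound.

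First, I would note that $\QBSF$ in CNF is a syntactic restriction of $\QBSF$, and the same for DNF; any CNF/DNF qbsf is in particular a qbsf, so the membership inclusion $\QBSF$-CNF, $\QBSF$-DNF $\subseteq \QBSF$ holds trivially. Hence from \Cref{thm:soqbf-in-aexp} we immediately get that both fragments lie in $\AEXPPOLY$.

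Second, \Cref{thm:soqbf-hardness} already establishes $\leqlogm$-hardness for $\AEXPPOLY$ of each of the two normal forms directly (the reduction produces a formula whose matrix is in CNF, and the DNF case is obtained by negation via the dualised machine $M'$). Combining the two halves yields $\leqlogm$-completeness of $\QBSF$-CNF and $\QBSF$-DNF for $\AEXPPOLY$.

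There is no real obstacle here; the work was done in the previous two theorems. The only thing worth double-checking when writing it out is that the formula produced by the hardness reduction is indeed a legal qbsf (which it is: the prefix consists of function quantifiers $\Game_i c_i$ followed by a block of propositional quantifiers $\exists \vec z$ or $\forall \vec z$, and the matrix is a propositional CNF respectively DNF over atoms of the form $c_i(\cdots)$ and propositions $z^t_p$), so it indeed witnesses membership in the syntactically restricted fragment, not just in $\QBSF$ in general.
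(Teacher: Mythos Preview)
Your proposal is correct and matches the paper's approach exactly: the corollary is stated without proof in the paper precisely because it is the immediate combination of \Cref{thm:soqbf-in-aexp} (membership, inherited by the syntactic fragments) and \Cref{thm:soqbf-hardness} (hardness, already proved for CNF and DNF separately). Your additional sanity check that the reduction output is a legal prenex qbsf in the required normal form is a reasonable remark but not something the paper spells out.
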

 
\section{Fragments with bounded quantifier alternation}

In Orponen's original characterization of the $\SigmaE{k}$ classes a language $A$ is expressed by a sequence of $k$ alternatingly quantified oracles, the input being verified by a $\PiP{k}$ oracle machine \cite{orponen}. Baier and Wagner improved this to a single word quantifier in the "first-order" suffix of the characterization instead of $k$ word quantifiers (see \Cref{thm:simulation}).

In this section we use a different strategy to reduce the first-order quantifier alternations directly on the level of formulas. The difference to Baier's and Wagner's result is that we obtain CNF formulas where previously only DNF formulas could be obtained and vice versa.
We define the following restricted problem of QBSF:

\begin{definition}
Let $n,m,k,\ell$ be non-negative integers, or $\omega$, and $P,Q \in \{ \Sigma, \Pi \}$. Write $\QBSF(P^n_m Q^\ell_k)$ for the restriction of $\QBSF$ to (prenex) formulas of the form
\[
\Game_1 f_1 \; \ldots\; \Game_p f_p \;\; \Game'_1 g^0_1 \;\ldots\; \Game'_q g^0_q \;\;H
\]
where $H$ is quantifier-free, $f_i$ are functions of arbitrary arities, $g_i$ are functions of arity zero (\ie, propositional variables), $p \leq n, q \leq \ell$, the quantifiers $\Game_1 \ldots \Game_p$ alternate at most $m-1$ times, the quantifiers $\Game'_1 \ldots \Game'_\ell$ alternate at most $k-1$ times, $\Game_1 = \exists$ iff $Q = \Sigma$, and $\Game'_{1} = \exists$ iff  $Q' = \Sigma$.
\end{definition}

\begin{example}
The formula $\exists f \, \forall x \, \forall y \, (x \land y \leftrightarrow f(x, y))$ is in $\QBSF(\Sigma^1_1 \Pi^2_1)$ and $\QBSF(\Sigma^\omega_1, \Pi^\omega_1)$, but not in $\QBSF(\Pi^1_1 \Sigma^\omega_\omega)$ or in $\QBSF(\Sigma^1_1 \Pi^1_1)$.
\end{example}

The following theorem demonstrates the reduction of propositional quantifier alternations.
The idea behind this is that the truth of the whole first-order part can be encoded in a single Boolean function. Denote dual quantifiers as $\overline{\exists} \dfn \forall$ and $\overline{\forall} \dfn \exists$.

\begin{theorem}[First-order alternation reduction]\label{thm:qbf_reduct}
Let $\varphi = \Game f \Game_1 x^0_1 \ldots \Game_k x^0_k H$ be a qbsf such that $H$ is quantifier-free and in CNF \emph{(}$\Game = \exists$\emph{)} resp. in DNF \emph{(}$\Game = \forall$\emph{)}.

Then there is an equivalent qbsf $\xi = \Game g \overline{\Game} x^0_1 \ldots \overline{\Game} x^0_k H'$ computable in logarithmic space, where $H'$ is in CNF \emph{(}$\Game = \exists$\emph{)} resp. in DNF \emph{(}$\Game = \forall$\emph{)}.
\end{theorem}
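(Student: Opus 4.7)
I focus on the case $\Game = \exists$; the case $\Game = \forall$ then follows by applying the $\exists$/CNF statement to $\neg\varphi \equiv \exists f\, \overline{Q_1} x_1 \cdots \overline{Q_k} x_k\, \neg H$ (whose matrix $\neg H$ is in CNF since $H$ was in DNF) and negating the resulting $\exists g\, \forall x_1 \cdots \forall x_k\, H''$ back to obtain the required $\forall g\, \exists x_1 \cdots \exists x_k$/DNF form.

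For the main case, write $\varphi = \exists f\, Q_1 x_1 \cdots Q_k x_k\, H$ with $H = \bigwedge_{i=1}^{N} C_i$. The plan is to pack into a single function $g$, via a short block of tag-bits prepended to its arguments, not only a copy $F$ of $f$ but also auxiliary ``game-value'' functions $V_0, V_1, \ldots, V_k$ corresponding to the levels of the quantifier alternation, together with Tseitin-style auxiliaries $Z_1, \ldots, Z_N$ indexed by the clauses of $H$. Writing $F, V_j, Z_i$ for the respective tag-evaluations of $g$, the matrix $H'$ is then the conjunction of the following CNF clauses, all interpreted under $\forall x_1 \cdots \forall x_k$: the unit clause $V_0(\vec x)$; for each $j < k$, the three CNF clauses encoding $V_j(\vec x) \leftrightarrow V_{j+1}(\vec x[x_{j+1} \mapsto 0]) \lor V_{j+1}(\vec x[x_{j+1} \mapsto 1])$ when $Q_{j+1} = \exists$, with $\land$ in place of $\lor$ when $Q_{j+1} = \forall$; the $N+1$ clauses ($\neg V_k \lor Z_i$ for each $i$, plus $V_k \lor \bigvee_i \neg Z_i$) encoding $V_k(\vec x) \leftrightarrow \bigwedge_i Z_i(\vec x)$; and, for each $C_i = L_{i,1} \lor \cdots \lor L_{i,\ell_i}$, the $\ell_i+1$ Tseitin clauses for $Z_i(\vec x) \leftrightarrow C_i^{\ast}$, where $C_i^{\ast}$ is $C_i$ with every atom $f(\vec y)$ rewritten as the corresponding tag-evaluation of $g$. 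All clauses stay ``simple'' (arguments of $g$ are propositional variables or boolean constants), so $H'$ is a CNF of size $O(|\varphi|)$, manifestly producible in logarithmic space.

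The correctness of $\xi \equiv \varphi$ rests on the observation that, once both directions of every biconditional are imposed, a backward induction on $j$ starting from $V_k(\vec x) = H(F, \vec x)$ (forced by the Tseitin equivalences) shows that $V_j(\vec x)$ is forced to depend only on $x_1, \ldots, x_j$ and to equal the value of the subgame $Q_{j+1} x_{j+1} \cdots Q_k x_k\, H(F, \vec x)$. Hence the unit clause $V_0 = 1$ exactly captures the truth of $Q_1 x_1 \cdots Q_k x_k\, H(F, \vec x)$, so $\xi$ holds iff some $g$ provides an $F := g(\tau_f, \cdot)$ witnessing $\varphi$. The direction $\varphi \Rightarrow \xi$ is then immediate by setting $F := f$ and populating the $V_j, Z_i$ with their intended values, and the converse by extracting $f := g(\tau_f, \cdot)$ from a model of $\xi$.

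The principal obstacle is making this backward-induction argument precise: with only the ``forward'' implications of the biconditionals, a malicious $g$ could let $V_j$ depend on $x_l$ for $l > j$ and thereby ruin the game-tree extraction, so both directions of every biconditional are genuinely necessary. A secondary technicality is that the substitutions $\vec x[x_{j+1} \mapsto b]$ place boolean constants in function arguments, which I take the paper's ``simple'' condition to admit (treating $0, 1$ as $0$-ary formulas); if that reading were disputed, the same scheme can be realized at constant-factor overhead by duplicating tags of $g$ to play the roles of the two substitution values.
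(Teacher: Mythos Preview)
Your proof is correct, but it follows a genuinely different route from the paper's. The paper encodes in $g$ an \emph{assignment tree}: the characteristic function of a set $S$ of partial assignments to $x_1,\ldots,x_k$ (using two bits per variable, one ``defined'' bit and one value bit), and then writes clauses saying that the empty assignment is in $S$, that $S$ is closed under one (for $\exists$) resp.\ both (for $\forall$) extensions at each level, and that every total assignment in $S$ satisfies each clause $C_i$. Crucially, only the \emph{forward} direction of each implication is needed --- extra junk in $g$ is harmless as long as the tree reachable from the root is correct --- so the paper avoids biconditionals entirely. Your construction instead encodes the \emph{minimax value functions} $V_0,\ldots,V_k$ (plus Tseitin gates $Z_i$) and relies on full biconditionals to force $V_j$ to equal the intended subgame value via backward induction; as you rightly note, dropping either direction would break this. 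Both approaches pack everything together with $f$ into one symbol (the paper by padding $f$ and $g$ to a common arity and adding one selector bit, you by a block of tag bits) and both yield linear-size CNF in logspace. Your game-value encoding is perhaps more transparent for readers thinking in terms of alternating evaluation, while the paper's assignment-tree encoding buys the simplification that only one-sided implications are required. The duality argument you give for the $\forall$/DNF case is the same as the paper's.
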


\begin{proof}
Let $f$ have arity $m$, this will be important later on. The first-order part of $\varphi$ is $\varphi' \dfn \Game_1 x_1 \ldots \Game_k x_k H$ (we drop the arities from now on) --- as all quantified variables are merely propositions, it is an ordinary qbf, except that function atoms occur in $H$. Let $\calI$ be some interpretation of $f$. To verify $\calI \models \varphi'$ we can use a set $S$ which models an \emph{assignment tree} of $\varphi'$. $S$ should have the following properties: It contains the empty assignment; further if $S$ contains some partial assignment $s$ to $x_1, \ldots, x_{j-1}$ and $\Game_j = \exists$ ($\forall$), then it must also contain $s \cup \{x_j \mapsto 0\}$ or (and) $s \cup \{x_j \mapsto 1\}$. For any total assignment $s \in S$, \ie, which is defined on all $x_1, \ldots, x_k$, the interpretation $\calI \cup s$ must satisfy $H$.  It is clear that, by the semantics of QBSF, such $S$ exists iff $\calI \models \varphi'$.

This set $S$ is encoded in a new quantified function $g$ with arity $m^* \dfn \max\{m, 2k\}$. We define how $g$ represents each (possibly partial) assignment $s \in S$. For each propositional variable $x_i$ we use two bits, the first one tells if $s(x_i)$ is defined (1 if yes, 0 if no), and the second one tells the value $s(x_i) \in \{0,1\}$ (and, say, 0 if undefined). It is $m^* \geq 2k$, so all bits will fit into the arguments of $g$, and if $g$ has larger arity than $2k$ then the trailing bits are assumed constant 0. Write $\langle s \rangle$ for the binary vector of length $m^*$ which encodes $s$, then $g(\langle s\rangle) = 1$ iff $s \in S$.
For the actual reduction of the quantifier rank consider two cases. In the case $\Game= \exists$ it is $H$ in CNF, say $H \dfn \bigwedge_{i = 1}^{n} C_i$ for clauses $C_i$. The conditions of the set $S$ encoded by a given $g$ are verified by the following formula in CNF:
\begin{align*}
 \vartheta_{\varphi'}(g) \dfn &\forall x_1 \ldots \forall x_k \quad g(\vec{0}) \, \land \, \bigwedge_{i=1}^{n} \Big(g(1, x_1, \ldots, 1, x_k, \vec{0}) \rightarrow C_i\Big) \; \land\\
  \quad \bigwedge_{\substack{i=1\\ \Game_i = \exists}}^{k-1} &\Bigg(g(1, x_1, \ldots, 1, x_i, \vec{0}) \rightarrow\\
   \quad &\qquad\big(g(1, x_1, \ldots, 1, x_i, 1, 0, \vec{0}) \lor g(1, x_1, \ldots, 1, x_i, 1, 1, \vec{0})\big)\Bigg) \; \land\\
 \quad\bigwedge_{\substack{i=1\\ \Game_i = \forall}}^{k-1} &\Bigg(g(1, x_1, \ldots, 1, x_i, \vec{0}) \rightarrow g(1,x_1, \ldots, 1, x_i, 1, 1, \vec{0})\Bigg)\;  \land\\
 &\quad\Bigg(g(1, x_1, \ldots, 1, x_i, \vec{0}) \rightarrow g(1, x_1, \ldots, 1, x_i, 1, 0, \vec{0})\Bigg)
\end{align*}
$\vartheta_{\varphi'}(g)$ is logspace-computable from $\varphi'$. In $\varphi$ now replace $\varphi'$ with $\exists g \; \vartheta_{\varphi'}$. To see the correctness of this step assume that $\calI$ is an interpretation of $x_1,\ldots,x_k$. Since $\calI \models \varphi' \Leftrightarrow \calI \models \exists g\; \vartheta_{\varphi'}$, as explained above, we can apply \Cref{thm:invariant}.

For the case $\Game = \forall$ it is $\varphi'$ is in DNF. Consider $\vartheta_{\psi}$ where $\psi$ is the dual of $\varphi'$. Note that $\psi$ itself has a matrix in CNF and $\vartheta_{\psi}$ thus can be constructed as above. Further it holds
\[
\calI \models \varphi' \Leftrightarrow \calI \not\models \psi \Leftrightarrow \calI \not\models \exists g \; \vartheta_{\psi} \Leftrightarrow \calI \models  \forall g \; \neg \vartheta_{\psi}\text{.}
\]
Therefore replace $\varphi'$ now with $\forall g \; \widehat{\vartheta}_{\psi}$, where $\widehat{\vartheta}_{\psi}$ is the dual of $\vartheta_{\psi}$, and hence again in DNF.

Finally replace all occurrences of $f(a_1, \ldots, a_m)$ by $f(a_1, \ldots, a_m, 0, \ldots, 0)$, \ie, pad any possible interpretation of $f$ with zeros up to arity $m^*$. The functions $g$ and $f$ have then the same arity and identical quantifier type $\Game$. Hence we can merge them into a single function: Replace $\Game f \Game g$ by $\Game h$, and as well each expression $f(a_1, \ldots, a_{m^*})$ in the matrix with $h(0, a_1, \ldots, a_{m^*})$ and likewise $g(a_1, \ldots, a_{m^*})$ with $h(1, a_1, \ldots, a_{m^*})$. It is easy to see that the matrix then holds for some (all) interpretation(s) of $h$ if and only if it holds for some (all) interpretation(s) of $f$ and $g$. This concludes the proof.
\end{proof}

\begin{theorem}\label{thm:no-alt-completeness}
The following problems restricted to CNF or DNF are $\leqlogm$-complete:
\begin{itemize}
\item If $k$ is even, then $\QBSF(\Sigma^k_k \Sigma^{\omega}_1)$ for $\SigmaE{k}$ and $\QBSF(\Pi^k_k \Pi^{\omega}_1)$ for $\PiE{k}$.
\item If $k$ is odd, then $\QBSF(\Sigma^k_k \Pi^{\omega}_1)$ for $\SigmaE{k}$ and $\QBSF(\Pi^k_k \Sigma^{\omega}_1)$ for $\PiE{k}$.
\item $\QBSF(\Sigma^\omega_\omega \Sigma^{\omega}_1)$ and $\QBSF(\Sigma^\omega_\omega \Pi^{\omega}_1)$  for $\AEXPPOLY$.
\end{itemize}
\end{theorem}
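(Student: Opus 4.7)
My plan is to establish membership and hardness separately for each fragment, reusing the machinery built up so far. For membership, given a formula $\Game_1 f_1 \cdots \Game_k f_k\,\Game' \vec z\,H$ in, say, $\QBSF(\Sigma^k_k \Sigma^\omega_1)$, I would describe an alternating exponential-time machine with $k$ alternations: in the $i$-th alternation the machine guesses (existentially or universally, according to $\Game_i$) the full truth table of $f_i$, of size at most $2^{\size{\varphi}}$, which fits inside the exponential time budget; in its final phase it then exhaustively enumerates the polynomially many assignments to $\vec z$ and evaluates the CNF/DNF matrix by lookup. Because $\vec z$ is of polynomial length, the final propositional quantifier is absorbed into that last phase without contributing an alternation. \Cref{thm:alternating_exp_classes} then delivers membership in $\SigmaE{k}$ or $\PiE{k}$; the $\AEXPPOLY$ fragments work the same way with polynomially many function alternations, matching the definition of $\AEXPPOLY$ directly.

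For hardness in the CNF case with $L \in \SigmaE{k}$, I would invoke the $\SigmaE{k}$ analogue of \Cref{thm:alt_seq_pik} to rewrite membership as $\exists A_1 \forall A_2 \cdots \Game_k A_k\,\Game_{k+1} y$ such that $M$ accepts $\langle x, y\rangle$ with oracle $\langle A_1,\ldots,A_k\rangle$, and then replay the Cook--Levin encoding from the proof of \Cref{thm:soqbf-hardness}: each oracle $A_i$ becomes the characteristic function $c_i$, each oracle-query transition clause is replaced by polynomially many clauses containing second-order atoms $c_i(\cdots)$ or $\neg c_i(\cdots)$ (branching on the oracle index and on the answer), and $M$'s deterministic computation on $(x,y)$ is Cook-encoded into fresh propositions $\vec z$. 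This produces a CNF formula of the shape $\Game_1 c_1\cdots \Game_k c_k\,\Game_{k+1} y\,\exists \vec z\,\varphi_x$. If $k$ is even then $\Game_{k+1}=\exists$ and merging the two existential blocks $\exists y\,\exists \vec z$ already gives $\QBSF(\Sigma^k_k \Sigma^\omega_1)$ in CNF; if $k$ is odd then $\Game_k = \exists$ and $\Game_{k+1}=\forall$, so the suffix $\exists c_k\,\forall y\,\exists \vec z\,\varphi_x$ has exactly the shape required by \Cref{thm:qbf_reduct}, whose application collapses both propositional blocks into a single universal one while absorbing $c_k$ into the new function quantifier, landing in $\QBSF(\Sigma^k_k \Pi^\omega_1)$ in CNF. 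The $\PiE{k}$ cases are symmetric, using \Cref{thm:alt_seq_pik} and swapping the leading polarity.

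For the DNF variants, my plan is to dualise through the complementary class: given $L \in \SigmaE{k}$ in DNF, apply the CNF construction above to $\bar L \in \PiE{k}$ to obtain a CNF formula $\Phi$ in the corresponding $\Pi^k_k$ fragment; writing $L$ as $\neg \Phi$ and pushing the negation inward flips every quantifier and dualises the CNF matrix into a DNF matrix, producing the targeted $\Sigma^k_k$ DNF fragment. The two $\AEXPPOLY$ fragments are handled identically, using \Cref{thm:alt_seq_poly} (which dispenses with the trailing word quantifier) as the starting point; when either the dualisation step or an application of \Cref{thm:qbf_reduct} calls for the final function quantifier to have a particular polarity, I would append one dummy function quantifier of that polarity -- this is free of charge in the $\omega/\omega$ setting, where the polynomial alternation budget is never saturated.

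The main obstacle I anticipate is the combinatorial quantifier-type book-keeping: in every sub-case one must verify that the function quantifier on which \Cref{thm:qbf_reduct} acts carries the polarity mandated by the matrix normal form ($\exists$ for CNF, $\forall$ for DNF), and that the merging and absorption operations preserve the strict alternation pattern $\Sigma^k_k$ or $\Pi^k_k$ without sneaking in an extra function quantifier that would exceed the bound. The parity of $k$ drives this case analysis; once the parities are correctly aligned, each step is a direct invocation of \Cref{thm:qbf_reduct}, \Cref{thm:alt_seq_poly}, \Cref{thm:alt_seq_pik} (and its $\SigmaE{k}$ analogue), and the construction of \Cref{thm:soqbf-hardness}.
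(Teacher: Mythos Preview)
Your proposal is correct and follows essentially the same approach as the paper: membership by guessing truth tables and deterministically evaluating the first-order suffix in exponential time, hardness by combining the oracle characterizations (\Cref{thm:simulation}, \Cref{thm:alt_seq_poly}--\Cref{thm:alt_seq_pik}) with the Cook--Levin encoding of \Cref{thm:soqbf-hardness}, and finally invoking \Cref{thm:qbf_reduct} to collapse any residual first-order alternation. One small slip: in your membership argument you write ``polynomially many assignments to $\vec z$'', but of course there are \emph{exponentially} many assignments to polynomially many propositions; the point is that enumerating all of them still fits in the exponential time budget, so the first-order block costs no extra alternation. Your DNF route via dualising through the complementary class is a slight presentational variant of what the paper does (it encodes the complementary deterministic machine directly as $\forall \vec z\,\widehat{\varphi_x}$ in DNF, cf.\ the DNF case in the proof of \Cref{thm:soqbf-hardness}), but the two are equivalent.
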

\begin{proof}
The upper bounds work as in \Cref{thm:soqbf-in-aexp} by guessing the truth tables of the quantified functions.

For the lower bound consider a reduction similar to the proof of \Cref{thm:soqbf-hardness}. By \Cref{thm:simulation}  we can already correctly choose the number and quantifier type of the functions $c_1,\ldots,c_k$ when reducing from a $\SigmaE{k}$ or $\PiE{k}$ language. The first-order part can be constructed accordingly as in \Cref{thm:soqbf-hardness}, but, due to the single word quantifier introduced in \Cref{thm:simulation}, has now the form $\exists \vec{y} \, \exists \vec{z}  \,\varphi'_{x}(\vec{y},\vec{z})$ in CNF or  $\forall \vec{y} \, \forall \vec{z} \, \varphi'_{x}(\vec{y},\vec{z})$ in DNF. Note that we can represent the computation of the deterministic machine on input $x$ arbitrarily as $\exists \vec{z}  \,\varphi'_{x}$ in CNF or $\forall \vec{z}  \,\varphi'_{x}$ in DNF, therefore choose the quantifiers matching as stated above.

By this construction, the hardness for the CNF cases with $\Sigma^\omega_1$ first-order part and the DNF cases with $\Pi^\omega_1$ first-order part is shown.
In the remaining cases apply \Cref{thm:qbf_reduct} to obtain an equivalent formula but with CNF matrix after an $\Pi^\omega_1$ first-order prefix resp. DNF matrix after an $\Sigma^\omega_1$ first-order prefix.\end{proof}
 
\section{Fragments with Skolem functions}

In the previous sections we considered the QBSF problem where function atoms could occur multiple times in a formula, in particular with different arguments. A Skolem function of a proposition $x$ however is a Boolean function that depends only on certain other propositions $y_1,\ldots,y_n$, the so-called \emph{dependencies} of $x$. Hence, to connect QBSF to the Dependency QBF problem \cite{dqbf} and other logics of independence, we now focus on formulas where all quantified functions are Skolem functions:

\begin{definition}
Let $n,m,k,\ell$ be non-negative integers or $\omega$. Let $P,Q \in \{\Sigma,\Pi\}$. Write $\SKOL(P^n_m Q^\ell_k)$ for the restriction of $\QBSF(P^n_m Q^\ell_k)$ to formulas in which for all function symbols $f$ it holds that $f$ always occurs with the same arguments.
\end{definition}

In contrast, \textsf{DQBF} is defined as follows:

\begin{definition}
Every formula of the form $\forall \vec{x} \; \exists y_1(\vec{z}_1) \ldots \exists y_n(\vec{z}_n) \; H$ is called a \emph{dqbf}, where the matrix $H$ is a quantifier-free propositional formula and $\vec{z}_i \subseteq \vec{x}$ \fa $i = 1,\ldots,n$.

A dqbf of this form is \emph{true} if for all $i = 1,\ldots,n$ there is a Skolem function $Y_i$ of $y_i$ depending only on $\vec{z}_i$ \suchthat for all assignments to $\vec{x}$ the matrix $H$ evaluates to true, provided the values of $Y_i$ are assigned to $y_i$.

As a decision problem, \textsf{DQBF} is defined as the set of all true dqbfs.
\end{definition}

\smallskip

In this section we will prove that the restricted problem $\SKOL$ is complete for the same complexity classes as the general case, \ie, the number of function quantifier alternations again determines the level in the exponential hierarchy.

Orponen's characterization via alternating oracle quantification allows polynomial time machines to recognize languages with exponential time complexity \cite{orponen}. \citeauthor{hannula_complexity_2015} generalized this to handle a polynomial number of oracles \cite{hannula_complexity_2015}. In the following we use the notion of \emph{tableaus} to adapt these characterization to our needs.

Call an oracle machine \emph{single-query machine} if it asks at most one oracle question. An indirect simulation with a single-query machine allows, since the oracle tape cells directly correspond to the arguments of the quantified functions, an encoding in $\SKOL$ as follows.

\medskip

If $M$ is an ATM with state set $Q$ and tape alphabet $\Gamma$, then a \emph{configuration} of $M$ is a finite sequence $C \in (Q \cup \Gamma)^*$ which contains exactly one state.
A \emph{tableau} $T$ of $M$ is a finite sequence $C_1,\ldots,C_n$ of equally long configurations of $M$ such that each $C_{i+1}$ results from $C_i$ by a transition of $M$.
A tableau is \emph{pure} if all states assumed in the tableau except the last one have the same alternation type, \ie, all existential or all universal. A pure tableau $C_1,\ldots,C_n$ is \emph{alternating} if $n \geq 2$ and $q_n$ has a different alternation type than $q_1,\ldots,q_{n-1}$, where $q_i$ is the state of $C_i$.
If $T, T'$ are tableaus of $M$, then $T'$ is a \emph{successor tableau} of $T$ if its first configuration is equal to $T$'s final configuration.

\smallskip

Let $T$ be a pure tableau that assumes $q$ as its last state. Say that $T$ is $k$-\emph{accepting} if
\begin{itemize}
	\item either $k = 1$ and $q$ is an accepting state of the machine $M$,
	\item or $k > 1$ and $T$ is alternating and
	\begin{itemize}
	\item $q$ is existential and $T$ has a pure $(k-1)$-accepting successor tableau,
	\item or $q$ is universal and all pure successor tableaus of $T$ are $(k-1)$-accepting.
	\end{itemize}
\end{itemize}

\begin{theorem}[Single-query indirect simulation]\label{thm:single-query}
For every $Q \in \{\Sigma, \Pi\}$ and every $Q^E_{g(n)}$-machine $M$ there is a polynomial $h$ and a single-query oracle $\SigmaP{4}$-machine $N$ such that $M$ accepts $x$ if and only if
\[
\Game_1 A_1 \; \Game_2 A_2  \; \ldots \; \Game_{g(\size{x})} A_{g(\size{x})} \,:\, N \text{ accepts } x \text{ with oracle }\langle A_1, \ldots, A_{g(\size{x})} \rangle\text{,}
\]
where $\Game_1, \ldots$ are alternating quantifiers starting with $\exists$ ($Q = \Sigma$) resp. $\forall$ ($Q = \Pi$), and further $A_i \subseteq \{0,1\}^{h(\size{x})}$ \fa $i = 1, \ldots, g(\size{x})$.
\end{theorem}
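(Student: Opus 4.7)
My plan is to combine Orponen-style indirect simulation with the tableau machinery introduced just above the theorem, trading a stronger verifier alternation budget for a single oracle query. Without loss of generality I take $Q = \Sigma$; the $\Pi$-case is dual. The starting point is to characterize acceptance of $M$ in terms of tableaus: since $M$ is a $\SigmaE{g(n)}$-machine running in time $2^{p(n)}$ for some polynomial $p$, unfolding the inductive definition of $k$-accepting shows that $M$ accepts $x$ iff there is an alternatingly quantified sequence of pure tableaus $T_1, \ldots, T_{g(\size{x})}$ such that $T_1$ begins with the initial configuration of $M$ on $x$, each $T_{i+1}$ is a pure successor of $T_i$, the alternation type of $T_i$ matches $\Game_i$, and $T_{g(\size{x})}$ ends in an accepting state.

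Each $T_i$ is exponentially large but can be represented by a polynomial-width oracle: since $T_i$ has at most $2^{p(n)}$ configurations of length at most $2^{p(n)}$, an address of the form (time step, tape position, cell content) fits into $\bigO{p(n)}$ bits. I choose $h$ accordingly and encode $T_i$ as the set $A_i \subseteq \{0,1\}^{h(\size{x})}$ containing exactly those addresses that are realized in $T_i$. Cross-oracle access goes through the efficient disjoint union $\langle A_1, \ldots, A_{g(\size{x})}\rangle$. The prefix $\Game_1 A_1 \cdots \Game_{g(\size{x})} A_{g(\size{x})}$ then mirrors the alternation pattern of $M$ exactly.

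What remains is to build the single-query $\SigmaP{4}$-machine $N$ which, on input $x$ with oracle $\langle A_1, \ldots, A_{g(\size{x})}\rangle$, accepts iff the encoded tableaus form an accepting computation. All verification conditions are local: correct initial configuration in $T_1$, correct alternation type of each $T_i$, legality of each one-step transition inside $T_i$, agreement of the last configuration of $T_i$ with the first of $T_{i+1}$, and acceptance of the final state of $T_{g(\size{x})}$. Because a single oracle query delivers only one bit, I read the constantly many cells needed for a local check in a distributed fashion: existentially guess the full local window of cell values as a witness, then let a universal choice pick which of those guessed cells is to be verified against the oracle through the single query, then check rule consistency deterministically from the guessed values. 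An outer $\exists$ dispatches over the finitely many types of checks (and chooses the offending location), while an outer $\forall$ quantifies over all locations at which something could go wrong, producing an $\exists\forall\exists\forall$ skeleton that fits into $\SigmaP{4}$.

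The main obstacle is the tight alternation bookkeeping that compresses every local consistency check into one oracle query while remaining within four alternations. Inter-tableau boundary conditions are particularly delicate, since they compare bits drawn from two distinct oracles $A_i$ and $A_{i+1}$ and must be woven into the same $\SigmaP{4}$ skeleton without spawning a second query; this is handled by letting the innermost universal layer choose not only which cell of the local window to read but also which of the two oracles to consult, so that a single query to $\langle A_1,\ldots,A_{g(\size{x})}\rangle$ suffices. Correctness follows because any violation — wrong initial configuration, illegal transition, mismatched boundary, incorrect alternation type, or non-accepting terminal state — is exposed along some universal branch, which then makes its one oracle query to the offending bit and rejects.
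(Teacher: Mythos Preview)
Your encoding of tableaus as oracles and the single-query trick (guess the local window existentially, then universally pick one cell to actually read) are both fine and match the paper. The gap is in the acceptance condition you give $N$.

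You say $N$ should accept iff ``the encoded tableaus form an accepting computation'' and that ``any violation \ldots\ is exposed along some universal branch \ldots\ and rejects.'' This is wrong for the universally quantified oracles. The outer quantifier $\forall A_i$ ranges over \emph{all} subsets of $\{0,1\}^{h(\size{x})}$, not just over valid encodings of pure successor tableaus. If the $\forall$-player hands you garbage for $A_i$ (an invalid tableau, or one whose first configuration does not match the last configuration of $A_{i-1}$), that is not a counterexample to $M$'s acceptance; $N$ must \emph{accept} in that case. Symmetrically, for $\exists$-quantified $A_i$ a violation must cause rejection. Your verifier treats all violations uniformly as rejections, so for any $g(\size{x}) \geq 2$ the $\forall$-player can defeat $N$ on every input simply by supplying an ill-formed $A_2$, regardless of what $M$ does.

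The paper resolves this by writing the acceptance predicate as a nested expression $V_1$ where each $\forall$-level contributes an implication $(\mathrm{Val}_i \wedge \mathrm{Init}_i) \to (\mathrm{Alt}_i \wedge V_{i+1})$ and each $\exists$-level a conjunction. Unfolding this chain gives: $V_1$ holds iff there is a cut point $i$ such that all validity/initialization/alternation predicates for levels $< i$ hold, and either $i$ is past the end (a genuine accepting run) or $\Game_i = \forall$ and one of $\mathrm{Val}_i, \mathrm{Init}_i$ fails (the $\forall$-player cheated first). The machine $N$ then existentially guesses $i$ and which predicate fails, universally branches over the predicates in the resulting set, and spends the remaining two alternations checking a single predicate with one query. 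Your outer $\exists\forall$ skeleton is the right shape, but what the outer $\exists$ must guess is this cut point, not merely ``the type of check.''
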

\begin{proof}
Let $M$ have runtime $f$, and let $m \dfn g(\size{x})$ and $n \dfn f(\size{x})$.
\Wloss{} we can assume the following: $m \leq n$, $2 \leq n$, $M$ does always exactly $m$ alternations before it accepts or rejects, and in each alternation phase it does exactly $n$ steps before it alternates, rejects or accepts. These properties imply that $M$ accepts $x$ if and only if all resp. some of its pure tableau starting with the initial configuration are $m$-accepting.
We further assume for simplicity that $M$ uses only one tape. 
\smallskip

The idea is to encode the tableaus $T_1, \ldots, T_m$ of the alternation phases of the computation of $M$ in the oracles $A_1, \ldots, A_m$ as follows. Words of $A_i$ are \emph{indexed cells} $w = (c, t, p)$. Here $p$ denotes the position of the cell on the tape, $t$ is the current timestep, and $c \in Q \cup \Gamma$ is either the symbol written at position $p$, or the state of $M$ if $p$ happens to be the head position at timestep $t$.
Let $h$ be the polynomial size of such a window $w$ encoded over $\{0,1\}$ (by binary encoding of $t$ and $p$).
A set $A \subseteq \{0,1\}^{h(\size{x})}$ then represents a tableau $T = C_1, \ldots, C_n$ in the sense that $(c, t, p) \in A$ if and only if the cell at tape position $p$ contains $c$ at timestep $t$.

\smallskip

We now translate the acceptance condition of $M$ into a formal expression, using $k$-acceptance of tableaus:
\[
\mathrm{Acc}_i \dfn \begin{cases}\exists A_i \subseteq \{ 0,1\}^{h(\size{x})} \quad(\mathrm{Val}_i \land \mathrm{Init}_i \land \mathrm{Alt}_i \land \mathrm{Acc}_{i+1})\quad &\text{if }\Game_i = \exists\\
\forall A_i \subseteq \{0,1\}^{h(\size{x})} \quad(\mathrm{Val}_i \land \mathrm{Init}_i) \rightarrow (\mathrm{Alt}_i \land \mathrm{Acc}_{i+1})\quad &\text{if }\Game_i=\forall\end{cases}
\]
for $1 \leq i \leq m$. The semantics is that $\mathrm{Val}_i$ is true if $A_i$ encodes a pure tableau of $M$, $\mathrm{Init}_1$ is true if there is the initial configuration of $M$ on $x$ encoded in $A_1$,  $\mathrm{Init}_i$ for $i > 1$ is true if the first configuration of $A_i$ is equal to the last configuration of $A_{i-1}$, (\ie, $A_i$ is a successor tableau of $A_{i-1}$), $\mathrm{Alt}_i$ for $i < m$ is true if the tableau encoded in $A_i$ is end-alternating, $\mathrm{Alt}_m$ is true if the tableau encoded in $A_m$ is $1$-accepting, and $\mathrm{Acc}_{m+1}$ is always true.

\smallskip

By the above definitions $M$ accepts $x$ if and only if the predicate $\mathrm{Acc}_1$ is true, as it states that $M$ has an $m$-accepting pure initial tableau.
The formula can be written in prenex form, \ie, $\mathrm{Acc}_1$ holds if and only if $\Game_1 A_1 \subseteq \{0,1\}^{h(\size{x})} \ldots \Game_m A_m \subseteq \{0,1\}^{h(\size{x})} \; V_1$ holds, where the predicate $V_i$ is defined as
\[
V_i \dfn \begin{cases}(\mathrm{Val}_i \land \mathrm{Init}_i \land \mathrm{Alt}_i \land V_{i+1})&\text{if }\Game_i = \exists\\
(\mathrm{Val}_i \land \mathrm{Init}_i) \rightarrow (\mathrm{Alt}_i \land V_{i+1})\quad &\text{if }\Game_i =\forall\end{cases}
\]
for $1 \leq i \leq m$, and $V_{m+1} = 1$.
To prove the theorem we give now a single-query oracle ATM $N$ with polynomial runtime and $4$ alternations which accepts if and only if $V_1$ is true.

\smallskip

For a predicate $P$ write $\overline{P}$ for its complement. Group the predicates above as follows:
\begin{align*}
\calT_i &\dfn \Set{ \mathrm{Val}_j, \mathrm{Init}_j, \mathrm{Alt}_j  | 1 \leq j < i} \text{ for }i = 1, \ldots, m+1\text{, }\\
\calF^0_i &\dfn \Set{ \overline{\mathrm{Val}_i} }\text{, }\calF^1_i \dfn \Set{ \overline{\mathrm{Init}_i} }\text{ for }i=1,\ldots, m\text{, and }\calF^0_{m+1} \dfn \emptyset\text{, } \calF^1_{m+1} \dfn \emptyset\text{.}
\end{align*}
By its definition $V_1$ is true if and only if $\exists i \in \{1,\ldots,m\}, \exists d \in \{0,1\}$ \suchthat all predicates in $\calS^d_i \dfn \calT_i \cup \calF^d_i$ are true and further $\Game_i = \forall$ or $i > m$.
Hence the machine $N$ is defined to work as follows:
\begin{enumerate}
	\item In time $\bigO{\log g}$ existentially guess $i$ and $d$,
	\item In time $\bigO{\log g}$ universally branch on every predicate $P$ in $\calS^d_i$,
	\item Verify that $P$ is true.
\end{enumerate}

It only remains to verify that every predicate $P$ (and accordingly $\overline{P}$) in $\calS^d_i$ can be checked in polynomial time, with only one oracle query, and at most two additional alternations.

We sketch the required alternating procedures, where quantifier symbols $\exists, \forall$ always imply branching.

If $(c,t,p)$ is a cell, then $w \in A_i$ means that $A_i$ contains the encoding of $w$. The available timesteps $t$ in each tableau range over $\{0, \ldots, n\}$. The available positions $p$ in the configurations are $\{0,\ldots,n,n+1,\ldots,2n+1\}$, where the input word is placed on positions $n+1, \ldots, n + \size{x}$ and the initial state of $M$ is given on position $n$.

The predicates are checked as follows:
\begin{itemize}
	\item $\mathrm{Val}_i$: (check in parallel)\begin{itemize}
	\item $\forall w \in \{0,1\}^h$ : if $w$ is no valid encoded cell then $w \notin A_i$,
	\item $\forall t  \, \forall p \; \exists c \in Q \cup \Gamma \; : \;(c,t,p) \in A_i$,
	\item $\forall w_0 = (c,t,p) \; \forall w_1 = (c', t, p) : $ if $c \neq c'$ then $\exists j \in \{0,1\}$ \suchthat $w_j \notin A_i$,
	\item $\forall w_0 = (c_0, t, p - 1) \;\, \forall w_1 = (c_1, t, p) \;\, \forall w_2 = (c_2, t, p + 1)$\\
    $\forall w_3 = (c_3, t + 1, p -1)\;\,  \forall w_4 = (c_4, t+1, p)\;\,  \forall w_5 = (c_5, t+1, p+1)$ :\\
    if $M$ has no transition from $(w_0,w_1,w_2)$ to $(w_3,w_4,w_5)$ then $\exists j \in \{0,\ldots,5\}$ \suchthat $w_j \notin A_i$,
	\item $\forall w_0 = (c,t,p) \; \forall w_1 =(c', t', p') :$ if $t < t' < n$ and $c, c'$ are states with different alternation types then $\exists j \in \{0,1\}$ \suchthat $w_j \notin A_i$,
	\end{itemize}
	\item $\mathrm{Alt}_i$, $i < m$: $\exists w_0 = (c, n-1,p) \; \exists w_1 = (c', n, p')$ \suchthat $w_0$ and $w_1$ contain states with different alternation types and $\forall j \in \{0,1\} : w_j \in A_i$,
	\item $\mathrm{Alt}_{m}$: $\exists w = (q,n,p)$ : $q$ is an accepting state of $M$ and $w \in A_m$
	\item $\mathrm{Init}_1$: (check in parallel)\begin{itemize}
	\item $\forall i \in \{1, \ldots, \size{x}\} \; \exists w = (c, 0, n+i)$ \suchthat $c$ is the $i$-th letter of $x$ and $w \in A_1$,
    \item $(q_0,0,n) \in A_1$ where $q_0$ is the initial sate of $M$,
    \item $\forall i \notin \{n, \ldots,n+\size{x}\} \; \exists w = (\Box, 0, i) \in A_1$,
	\end{itemize}
	\item $\mathrm{Init}_i$, $i > 1$: $\forall w_0 = (c, 0, p) \; \exists w_1 = (c', n, p)$ : if $c \neq c'$ then $\exists j \in \{0,1\}$ \suchthat $w_j \notin A_{i-j}$.\qedhere
\end{itemize}\end{proof}

\begin{theorem}
For $k \geq 1$, the following problems restricted to DNF are $\leqlogm$-complete:
\begin{itemize}
	\item $\SKOL(\Sigma^k_k \Sigma^{\omega}_4)$ for $\SigmaE{k}$,
	\item $\SKOL(\Pi^k_k \Sigma^{\omega}_4)$ for $\PiE{k}$,
	\item $\SKOL(\Sigma^\omega_\omega \Sigma^{\omega}_4)$ for $\AEXPPOLY$.
\end{itemize}
\end{theorem}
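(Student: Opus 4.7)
For the upper bounds, I rely on the fact that $\SKOL(\cdot) \subseteq \QBSF(\cdot)$. Given a formula in $\SKOL(\Sigma^k_k \Sigma^\omega_4)$, an alternating exponential time machine alternatingly guesses the truth tables of the $k$ quantified Boolean functions (each of exponential size), following the $k$ alternations of the function prefix. After all functions are fixed, the remaining $\Sigma^\omega_4$ propositional part is a $\SigmaP{4}$ statement over polynomially many propositional variables, which can be evaluated deterministically in exponential time and hence folded into the final alternation block (regardless of whether that block is existential or universal). This gives membership in $\SigmaE{k}$, and the analogous argument yields $\PiE{k}$ and $\AEXPPOLY$ for the other two fragments.

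For the hardness, the plan is to start from \Cref{thm:single-query}. Given $L \in \SigmaE{k}$, there is a polynomial $h$ and a single-query $\SigmaP{4}$-oracle machine $N$ with
\[
x \in L \Leftrightarrow \exists A_1 \forall A_2 \ldots \Game_k A_k \subseteq \{0,1\}^{h(\size{x})} : N \text{ accepts } x \text{ with oracles } \langle A_1, \ldots, A_k\rangle .
\]
Replace each oracle $A_i$ by a quantified Boolean function $c_i$ of arity $h(\size{x})$, yielding the function prefix $\exists c_1 \forall c_2 \ldots \Game_k c_k$ of type $\Sigma^k_k$. The remaining task is to encode \enquote{$N$ accepts $x$ with oracles $c_1,\ldots,c_k$} as a DNF matrix under a $\Sigma^\omega_4$ first-order prefix, with all $c_i$ in Skolem form.

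To obtain the DNF matrix with the required prefix, I first write the $\SigmaP{4}$ computation of $N$ as $\exists \vec z_1 \forall \vec z_2 \exists \vec z_3 \forall \vec z_4 \text{\textsf{det\_acc}}(x,\vec z)$, where \textsf{det\_acc} is the polynomial-time deterministic test once all nondeterministic choices are fixed. Using a dual Tseitin transformation, $\textsf{det\_acc}(x,\vec z) = 1$ iff $\forall \vec w \, \psi_{\mathrm{DNF}}(x,\vec z,\vec w)$ for a polynomial-size DNF $\psi_{\mathrm{DNF}}$ (namely the negation of the CNF for the dual machine). Merging $\forall \vec z_4 \forall \vec w$ into one universal block yields the desired $\exists\forall\exists\forall$ prefix with a DNF matrix.

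The main obstacle is the Skolem-form restriction, which requires each $c_i$ to always occur with the same arguments. Here the single-query property of $N$ from \Cref{thm:single-query} is decisive: along any deterministic continuation of the computation at most one oracle call is made. I introduce fresh propositional variables $\vec Q = (Q_1,\ldots,Q_{h(\size{x})})$ inside the universal block $\forall(\vec z_4,\vec w)$, intended to represent the arguments of the unique oracle query. The DNF matrix is augmented with Tseitin clauses that \emph{either} declare the tableau inconsistent (so the corresponding disjunct is satisfied vacuously) \emph{or} enforce (i)~that $\vec Q$ equals the oracle tape content at the query time step recorded in the tableau, and (ii)~that the recorded oracle answer equals $c_i(\vec Q)$ when the recorded oracle index equals $i$. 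All occurrences of every $c_i$ then syntactically use exactly the argument tuple $\vec Q$, so the resulting formula lies in $\SKOL(\Sigma^k_k \Sigma^\omega_4)$. The $\PiE{k}$ case is identical except that the function prefix starts with $\forall$, and the $\AEXPPOLY$ case is identical except that we allow $g(n)$ to be a polynomial, so that the function prefix length and alternation count are unbounded ($\Sigma^\omega_\omega$); the verifier $N$ remains a single-query $\SigmaP{4}$-machine, so the Skolemization works unchanged.
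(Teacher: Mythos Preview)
Your proposal is correct and follows essentially the same route as the paper: upper bounds via brute-force truth-table guessing as in \Cref{thm:no-alt-completeness}, and hardness via the single-query indirect simulation of \Cref{thm:single-query} followed by a Cook-style tableau encoding of the resulting deterministic oracle machine behind four word quantifiers (the paper phrases this step as the relativizing, single-query-preserving translation of $\SigmaP{4}$ machines into deterministic machines with word quantifiers, citing Baker et al.).

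The one genuine difference is your $\vec Q$ device. The paper does not introduce fresh query-argument variables at all: once the deterministic simulation is normalized so that its unique oracle call happens at a \emph{fixed} timestep $t$, the tableau variables encoding the oracle-tape cells at time $t$ already serve as the common argument tuple for every $c_i$, and the formula is in Skolem form by construction. Your $\vec Q$ trick (universally quantify a fresh tuple, add disjuncts \enquote{$\vec Q \neq$ oracle tape at $t$}, and write every atom as $c_i(\vec Q)$) is sound and keeps the matrix in DNF, but it is an avoidable detour; it would buy you something only if the query timestep could not be assumed fixed, which it can. Either way you land in $\SKOL(\cdot\,\Sigma^\omega_4)$ in DNF, so both arguments establish the theorem.
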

\begin{proof}
    The proof of the upper bounds is essentially the same as for \Cref{thm:no-alt-completeness}. The lower bound proof is again similar to \Cref{thm:soqbf-hardness}. It holds that the translation from $\SigmaP{k}$ machines to deterministic machines with word quantifiers relativizes (see \textcite[Lem.\ 1.1]{baker_second_1979}) and additionally preserves the single-query property.

	As only one oracle question is asked by the encoded machine (at timestep, say, $t$), every function symbol occurs only with a fixed argument set, which describes the content of the oracle tape (modulo the oracle index) at timestep $t$.
\end{proof}
The method of single-query indirect simulation can be applied to obtain an alternative proof for the hardness of \textsf{DQBF}. Peterson, Reif and Azhar \cite{dqbf} state that every dqbf has an equivalent \emph{functional form} which is in essence a QBSF formula with implicit function symbols. Similarly, all $\SKOL(\Sigma_1^\omega\Sigma^\omega_\omega)$ formulas are equivalent to the functional form of a DQBF formula with a straightforward efficient translation.

\begin{corollary}
	\textsf{DQBF} is $\leqlogm$-complete for $\SigmaE{1}$.
\end{corollary}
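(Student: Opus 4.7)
The plan is to prove both directions of the completeness claim.

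For $\textsf{DQBF} \in \SigmaE{1} = \NEXP$, the standard guess-and-verify argument suffices: a nondeterministic exponential-time machine guesses the complete truth tables of the Skolem functions $Y_1, \ldots, Y_n$, each of size at most $2^{\size{\vec{x}}}$, then for each of the $2^{\size{\vec{x}}}$ assignments to $\vec{x}$ evaluates the matrix $H$ after substituting $Y_i(\vec{z}_i)$ for $y_i$, and accepts iff the matrix is true throughout.

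For $\SigmaE{1}$-hardness, I would apply the preceding theorem at $k = 1$, which gives that $\SKOL(\Sigma^1_1 \Sigma^{\omega}_4)$ restricted to DNF is $\leqlogm$-hard for $\SigmaE{1}$. It then suffices to exhibit a logspace translation from such a $\SKOL$ formula into an equivalent DQBF, exploiting the functional-form equivalence of \cite{dqbf}. Given an input $\xi = \exists f_1(\vec{z}_1) \cdots \exists f_m(\vec{z}_m)\, \Game_1 x_1 \cdots \Game_n x_n\, H$, the translation Skolemizes the first-order prefix: for each $i$ with $\Game_i = \exists$ introduce a fresh function symbol $X_i(\vec{u}_i)$, where $\vec{u}_i$ is the tuple of universally quantified propositions preceding $x_i$; move every $\exists X_i$ to the front of the prefix, drop the corresponding $\exists x_i$ from the first-order block, and replace every occurrence of $x_i$ in $H$ by $X_i(\vec{u}_i)$.

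The main obstacle is that after this substitution, the original $f_j$ may receive compound terms $X_i(\vec{u}_i)$ as arguments rather than propositional variables, whereas the functional form of DQBF requires Skolem functions to take only universally quantified propositions. To repair this, I would absorb each $f_j$ into a new Skolem function $\tilde{f}_j(\vec{v}_j)$ whose argument list $\vec{v}_j$ consists of the universal propositions among $\vec{z}_j$ together with the union of all $\vec{u}_i$ for existential $x_i \in \vec{z}_j$; semantically $\tilde{f}_j(\vec{v}_j)$ is defined to equal $f_j$ evaluated at $\vec{z}_j$ with every existential $x_i$ replaced by $X_i(\vec{u}_i)$. A routine check shows this absorption is semantics-preserving: any assignment of witnesses to the original $f_j$ and $X_i$ determines a canonical witness for $\tilde{f}_j$, and conversely $\vec{v}_j$ carries exactly the information that governs the intended input to $f_j$. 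The resulting formula consists of an existential function prefix followed by a universal block of propositions, which is the functional form of a DQBF, and the whole translation is clearly logspace-computable.
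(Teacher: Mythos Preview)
Your upper bound and overall hardness strategy match the paper's, but the ``absorption'' step is incorrect. When you replace $f_j(\vec z_j)$ by $\tilde f_j(\vec v_j)$, the new function is allowed to depend on strictly more information than $f_j$ can see through the fixed $X_i$'s, so the translated formula may become true while the original is false. Concretely, take
\[
\xi \;=\; \exists f^1\,\forall u_1\,\forall u_2\,\exists x_3\;\bigl((x_3 \leftrightarrow u_1)\land(f(x_3)\leftrightarrow u_2)\bigr)\text{.}
\]
This is false: the first conjunct forces $x_3=u_1$, and then no unary $f$ can satisfy $f(u_1)=u_2$ for all $u_1,u_2$. Your translation yields $\vec v=(u_1,u_2)$ and produces $\exists \tilde f\,\exists X_3\,\forall u_1\forall u_2\,\bigl((X_3(u_1,u_2)\leftrightarrow u_1)\land(\tilde f(u_1,u_2)\leftrightarrow u_2)\bigr)$, which is satisfied by $X_3=u_1$, $\tilde f=u_2$. (Padding with dummy outer quantifiers puts $\xi$ into $\SKOL(\Sigma^1_1\Sigma^\omega_4)$ if you insist on that fragment.) The flaw is in your backward direction: $\vec v_j$ does not carry ``exactly'' the information governing the input to $f_j$; it carries more, so not every witness $\tilde f_j$ factors through some $f_j$.

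Two repairs are available. The cheapest, and all that the corollary actually needs, is to observe that in the concrete $\SKOL(\Sigma^1_1\Sigma^\omega_4)$ instances produced by the preceding hardness proof the arguments of the single function symbol are the oracle-tape cell variables, which sit in the innermost $\forall\vec z$ block of the DNF encoding of the deterministic machine; hence Skolemizing the first-order existentials never creates compound arguments and the absorption step is simply unnecessary. Alternatively, for a correct general translation from $\SKOL(\Sigma^\omega_1\Sigma^\omega_\omega)$ to \textsf{DQBF} (which the paper asserts without detail), do not absorb: make \emph{every} first-order variable $x_i$ a universal of the dqbf, introduce dqbf Skolem variables $y_{f_j}$ with dependency set $\vec z_j$ and $y_{x_i}$ with dependency set $\vec u_i$, and take as matrix $\bigl(\bigwedge_{\Game_i=\exists}(x_i\leftrightarrow y_{x_i})\bigr)\rightarrow H[f_j(\vec z_j)\mapsto y_{f_j}]$. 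Both directions of the equivalence then go through directly.
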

 
\section{Conclusion}

The presented completeness results for the exponential hierarchy are in analogy to the results known for QBF; still they differ in subtle points.
One difference is that the "$\omega$-jump" of QBSF is complete for $\AEXPPOLY$ and not for $\EXPSPACE$. The reason for this is that any given input of length $n$ with explicit quantifiers, like in the QBF style, can only express $n$ alternations. This differs from  decision problems which are defined via exponentially many alternations, \eg, certain games. It may be that the class $\AEXPPOLY$ is perhaps a more natural analogy to $\AP$ than $\AEXP$, at least in the cases where the number of quantifiers is bounded by the input itself, \eg, logical operators.

\smallskip

Other differences are with regard to normal forms: The $\SigmaP{k}$-hardness of \textsf{QBF}$_k$ already holds for CNF --- but only if the rightmost quantifier happens to be existential, \ie, $k$ is odd. If it is universal, \ie, $k$ is even, then \textsf{CNF-QBF}$_k$ is in $\SigmaP{k-1}$, \ie, it is supposedly easier. On the other hand, DNF establishes hardness for $\SigmaP{k}$ only for even $k$.

This collapse however does not occur in $\QBSF$.
This peculiar robustness can be explained if one remembers that function symbols occur in formulas. While the innermost existential guessing can be avoided in propositional formula in DNF (just scan for a single non-contradicting conjunction), this is not possible here: Is the conjunction $f(x_{1},x_{2}) \land g(x_{2},x_{3})$ self-contradicting or not? The hardness results are a hint that the structure of formulas with Boolean second-order variables is unlikely to exhibit such shortcuts as in propositional logic.

\smallskip

On the other hand for $\SKOL$ no such symmetry of CNF and DNF could be established, as \Cref{thm:qbf_reduct} does not preserve the $\SKOL$ condition.
Similarly, the proof of \Cref{thm:qbf_reduct} can at best produce 3CNF (and non-Horn) formulas, even if the matrix of the formula was already in 2CNF and Horn form. How does the complexity of $\QBSF$ change if 2CNF or 2DNF is considered, or horn formulas? How do CNF and DNF influence the complexity of the $\SKOL$ fragment? Can the single-query indirect simulation be done by a $\SigmaP{k}$ machine with $k < 4$? Can the problem \textsf{DQBF} be generalized to incorporate universal quantification of Skolem functions?

\subsection*{Acknowledgement}

The author thanks Heribert Vollmer for helpful discussions and hints as well as the anonymous referees for spotting errors and improving the clarity of this paper.

\printbibliography{}

\end{document}